\definecolor{myurlcolor}{rgb}{0,0,0.7}
\newcommand{\tinyspace}{\mspace{1mu}}
\newcommand{\proj}[1]{| #1\rangle\!\langle #1 |}
\newcommand{\abs}[1]{\left\lvert\tinyspace #1 \tinyspace\right\rvert}
\newcommand{\norm}[1]{\left\lVert\tinyspace #1 \tinyspace\right\rVert}
\renewcommand{\det}{\operatorname{det}}
\newcommand{\setft}[1]{\mathrm{#1}}
\newcommand{\density}[1]{\setft{D}\left(#1\right)}
\def\complex{\mathbb{C}}
\def\real{\mathbb{R}}
\def\natural{\mathbb{N}}
\def\I{\mathbb{1}}
\def \dif {\mathrm{d}}
\def \diag {\mathrm{diag}}
\def \re {\mathrm{Re}}
\newenvironment{mylist}[1]{\begin{list}{}{
    \setlength{\leftmargin}{#1}
    \setlength{\rightmargin}{0mm}
    \setlength{\labelsep}{2mm}
    \setlength{\labelwidth}{8mm}
    \setlength{\itemsep}{0mm}}}
    {\end{list}}
\def\ot{\otimes}
\newcommand{\Pa}[1]{\left(#1\right)}
\newcommand{\Br}[1]{\left[#1\right]}
\newcommand{\set}[1]{\{#1\}}
\newcommand{\Set}[1]{\left\{#1\right\}}
\newcommand{\bra}[1]{\langle#1|}
\newcommand{\ket}[1]{|#1\rangle}
\DeclareMathOperator{\trace}{Tr}
\newcommand{\Ptr}[2]{\trace_{#1}\Pa{#2}}
\newcommand{\Tr}[1]{\Ptr{}{#1}}
\def\cI{\mathcal{I}}
\def\cX{\mathcal{X}}
\def\rU{\mathrm{U}}
\def\sC{\mathscr{C}}
\def\L{\textsf{L}}
\newcommand{\cmp}{Comm. Math. Phys.~}
\newcommand{\ieee}{IEEE. Trans. Inf. Theory~}
\newcommand{\jmp}{J. Math. Phys.~}
\newcommand{\jpa}{J. Phys. A~}
\newcommand{\natphy}{Nature Phys.~}
\newcommand{\natcomm}{Nature Commun.~}
\newcommand{\njp}{New. J. Phys.~}
\newcommand{\prl}{Phys. Rev. Lett.~}
\newcommand{\pra}{Phys. Rev. A~}
\newcommand{\pre}{Phys. Rev. E~}
\newcommand{\prx}{Phys. Rev. X~}
\newcommand{\rmop}{Rev. Mod. Phys.~}
\newtheorem{thrm}{Theorem}[section]
\newtheorem{lem}[thrm]{Lemma}
\newtheorem{prop}[thrm]{Proposition}
\newtheorem{cor}[thrm]{Corollary}
\theoremstyle{definition}
\numberwithin{equation}{section}
\newcounter{questionnumber}
\begin{document}

\title{\large Average subentropy, coherence and entanglement of random mixed quantum states}

\author{Lin Zhang\footnote{E-mail: godyalin@163.com; linyz@zju.edu.cn}\\
  {\it\small Institute of Mathematics, Hangzhou Dianzi University, Hangzhou 310018, PR~China}\\
  Uttam Singh\footnote{E-mail: uttamsingh@hri.res.in},\, Arun K. Pati\footnote{E-mail: akpati@hri.res.in}\\
  {\it \small Harish-Chandra Research Institute, Allahabad, 211019, India}}

\date{}
\maketitle

\begin{abstract}
Compact expressions for the average subentropy and coherence are
obtained for random mixed states that are generated via various
probability measures. Surprisingly, our results show that the
average subentropy of random mixed states approaches to the maximum
value of the subentropy which is attained for the maximally mixed
state as we increase the dimension. In the special case of the
random mixed states sampled from the induced measure via partial
tracing of random bipartite pure states, we establish the typicality
of the relative entropy of coherence for random mixed states
invoking the concentration of measure phenomenon. Our results also
indicate that mixed quantum states are less useful compared to pure
quantum states in higher dimension when we extract quantum coherence
as a resource. This is because of the fact that average coherence of
random mixed states is bounded uniformly, however, the average
coherence of random pure states increases with the increasing
dimension. As an important application, we establish the typicality
of relative entropy of entanglement and distillable entanglement for
a specific class of random bipartite mixed states. In particular,
most of the random states in this specific class have relative
entropy of entanglement and distillable entanglement equal to some
fixed number (to within an arbitrary small error), thereby hugely
reducing the complexity of computation of these entanglement
measures for this specific class of mixed states.\\~\\
\textbf{AMS 2000 subject classifications.} 81P40, 62P35\\
\textbf{Keywords and phrases.} Quantum coherence, von Neumann
entropy, quantum subentropy, random quantum state, Selberg integral

\end{abstract}

\section{Introduction}

Miniaturization \cite{Linden2010} and technological advancements to
handle and control systems at smaller and smaller scales necessitate
the deeper understanding of concepts such as quantum coherence,
entanglement and correlations
\cite{Aspuru13,Horodecki2013,Skrzypczyk2014,Narasimhachar2015,Brandao2015,Rudolph214,Rudolph114,Plenio2008,Aspuru2009,Lloyd2011,Huelga13,Levi14}.
Two inequivalent resource theories of coherence has been proposed
\cite{Gour2008, Marvian14, Baumgratz2014} realizing the importance
of the coherence as a resource in various physical situations.
Recently, it has been proved that the coherence of a random pure
state sampled from the uniform Haar measure is generic  for higher
dimensional systems, i.e., most of the random pure states have
almost the same amount of coherence \cite{UttamB2015}. The
importance of this result and the similar results for entanglement
of random bipartite pure states cannot be overemphasized. The
average entanglement of random bipartite pure states, which is
facilitated by the calculation of average entropy of the marginals
of the random bipartite pure states
\cite{Page1993,Foong1994,Jorge1995, Sen1996}, is proved typical
\cite{Hayden2006}. This has resulted in various interesting
consequences in quantum information theory
\cite{Hayden2006,Hamma2012,Dahlsten2014,ZhangB2015,Nakata2015}, in
the context of black holes \cite{PageB1993} and in particular, in
explaining the {\it equal a priori probability} postulate of
statistical physics \cite{Popescu2006,Goldstein2006}. But as we
approach towards the realistic implementations of quantum
technology, mixed states are encountered naturally due to the
interaction between the system of interest and the external world.
Therefore, consideration of average entanglement and coherence
content of random mixed states is of great importance in realistic
scenarios. However, to the best of our knowledge, there is no known
result on the average coherence of random mixed states.

Here, we aim at finding the average relative entropy of coherence of
random mixed states sampled from various induced measures including
the one obtained via the partial tracing of the Haar distributed
random bipartite pure states. We first find the exact expression for
the average subentropy of random mixed states sampled from induced
probability measures and use it to find the average relative entropy
of coherence of random mixed states. We note that the subentropy is
a nonlinear function of state and therefore, it is expected that the
average subentropy of a random mixed state should not be equal to
the subentropy of the average state (the maximally mixed state).
Surprisingly, we find that the average subentropy of a random mixed
state approaches exponentially fast towards the maximum value of the
subentropy, which is achieved for the maximally mixed state
\cite{JozsaB1994}. As one of the applications of our results, we
note that the average subentropy may also serve as the state
independent quality factor for ensembles of states to be used for
estimating accessible information. Interestingly, we find that the
average coherence of random mixed states, just like the average
coherence of random pure states, shows the concentration phenomenon.
This means that the relative  entropies of coherence of most of the
random mixed states are equal to some fixed number (within an
arbitrarily small error) for larger Hilbert space dimensions. It is
well known that the exact computation of the most of the
entanglement measures for bipartite mixed states in higher
dimensions is almost impossible \cite{HorodeckiRMP09}. However,
using our results, we compute the average relative entropy of
entanglement and distillable entanglement for a specific class of
random bipartite mixed states and show their typicality for larger
Hilbert space dimensions. It means that for almost all random states
of this specific class, both the measures of entanglement are equal
to a fixed number (that we calculate) within an arbitrarily small
error, reducing hugely the computational complexity of both the
measures for this specific class of bipartite mixed states. This is
a very important practical application of the results obtained in
this paper.

\section{Quantum coherence and induced measures on the space of mixed states}\label{sec:rand-state}

\subsection{Quantum coherence}

Various coherence monotones, that serve as the faithful measures of
coherence \cite{Baumgratz2014, Alex15, Winter2015, UttamA2015}, are
proposed based on the resource theory of coherence
\cite{Baumgratz2014}. These monotones include the $l_1$ norm of
coherence, relative entropy of coherence \cite{Baumgratz2014} and
the geometric measure of coherence based on entanglement
\cite{Alex15}. In this work, unless stated otherwise, by coherence
we mean the relative entropy of coherence throughout the paper. The
relative entropy of coherence of a quantum state $\rho$, acting on
an $m$-dimensional Hilbert space, is defined as
\cite{Baumgratz2014}: $\sC_r(\rho) := S(\Pi(\rho)) - S(\rho)$, where
$\Pi(\rho)=\sum^m_{j=1}\proj{j}\rho\proj{j}$ for a fixed basis
$\set{\ket{j}:j=1,\ldots,m}$. $S(\rho) = -\Tr{\rho\ln\rho}$ is the
von Neumann entropy of $\rho$. All the logarithms that appear in the
paper are with respect to natural base.

\subsection{Induced measures on the space of mixed states}

Unlike on the set of pure states, it is known that there exist
several inequivalent measures on the set of density matrices,
$\density{\complex^m}$ (the set of trace one nonnegative $m\times m$
matrices). By the spectral decomposition theorem for Hermitian
matrices, any density matrix $\rho$ can be diagonalized by a unitary
$U$. It seems natural to assume that the distributions of
eigenvalues and eigenvectors of $\rho$ are independent, implying
$\mu$ to be product measure $\nu\times\mu_{\mathrm{Haar}}$, where
the measure $\mu_{\mathrm{Haar}}$ is the unique Haar measure on the
unitary group and measure $\nu$ defines the distribution of
eigenvalues but there is no unique choice for it
\cite{Zyczkowski2001, Zyczkowski2003}.

The induced measures on the $(m^2-1)$-dimensional space
$\density{\complex^m}$ can be obtained by partial tracing the
purifications $\ket{\Psi}$ in the larger composite Hilbert space of
dimension $mn$ and choosing the purified states according to the
unique measure on it. Following Ref. \cite{Zyczkowski2001}, the
joint density $P_{m(n)}(\Lambda)$ of eigenvalues
$\Lambda=\set{\lambda_1,\ldots,\lambda_m}$ of $\rho$, obtained via
partial tracing, is given by
\begin{eqnarray}
P_{m(n)}(\Lambda) =
C_{m(n)}K_1(\Lambda)\prod^m_{j=1}\lambda^{n-m}_j\theta(\lambda_j),
\end{eqnarray}
where the theta function $\theta(\lambda_j)$ ensures that $\rho$ is
positive definite, $C_{m(n)}$ is the normalization constant and
$K_1(\Lambda)$ is given by
\begin{eqnarray}\label{eq:k-lambda}
K_\gamma(\Lambda) =
\delta\Pa{1-\sum^m_{j=1}\lambda_j}\abs{\Delta(\lambda)}^{2\gamma},
\end{eqnarray}
for $\gamma=1$ with $\Delta(\lambda)=\prod_{1\leqslant i<j\leqslant
m}(\lambda_i-\lambda_j)$. See Refs.
\cite{Zyczkowski2001,Zyczkowski2003} for a good exposition of
induced measures on the set of density matrices.

Now we introduce the family of integrals $\cI_m(\alpha,\gamma)$ that will be a key to our work, where
\begin{eqnarray}\label{eq:alpha-beta}
\cI_m(\alpha,\gamma)&:=& \int^\infty_0\cdots\int^\infty_0 K_\gamma(\Lambda)\prod^m_{j=1}\lambda^{\alpha-1}_j\dif\lambda_j\nonumber\\
&=&b_m(\alpha,\gamma)\prod^m_{j=1}\frac{\Gamma\Pa{\alpha+\gamma(j-1)}\Gamma\Pa{1+\gamma
j}}{\Gamma\Pa{1+\gamma}},
\end{eqnarray}
with $\alpha,\gamma>0$, $\Gamma(z):=\int^\infty_0 t^{z-1}e^{-t}\dif
t$ is the Gamma function, defined for $\re(z)>0$ and
$b_m(\alpha,\gamma)=\set{\Gamma\Pa{\alpha m+\gamma m(m-1)}}^{-1}$.
The value of above family of integrals can be obtained using
Selberg's integrals \cite{Zyczkowski2001, Zyczkowski2003,
Andrews1999} (see Appendix~\ref{subsect:sellberg-integral} for a
quick review of Selberg's integrals). Let us define
$C^{(\alpha,\gamma)}_m=1/{\cI_m(\alpha,\gamma)}$, which are called
as normalization constants. A family of probability measures over
$\real^m_+$ can be defined as:
\begin{eqnarray}
\dif\nu_{\alpha,\gamma} (\Lambda) := C^{(\alpha,\gamma)}_m
K_\gamma(\Lambda)\prod^m_{j=1}\lambda^{\alpha-1}_j \dif\lambda_j.
\end{eqnarray}
Also, $\nu_{\alpha,\gamma}$ is a family of normalized probability
measures over the probability simplex
$$
\Delta_{m-1}:=
\Set{\Lambda=(\lambda_1,\ldots,\lambda_m)\in\real^m_+:\sum^m_{j=1}\lambda_j=1},
$$
i.e.,
\begin{eqnarray*}
\nu_{\alpha,\gamma}\Pa{\Delta_{m-1}} = \int \dif\nu_{\alpha,\gamma}(\Lambda)=1.
\end{eqnarray*}
Now a family of probability measures $\mu_{\alpha,\gamma}$ over the
set $\density{\complex^m}$ of all $m\times m$ density matrices on
$\complex^m$ can be obtained via spectral decomposition of
$\rho\in\density{\complex^m}$ with $\rho=U\Lambda U^\dagger$ as
follows
\begin{eqnarray}\label{eq:alpha-gamma}
\dif\mu_{\alpha,\gamma}(\rho) =
\dif\nu_{\alpha,\gamma}(\Lambda)\times \dif\mu_{\mathrm{Haar}}(U),
\end{eqnarray}
where
$\dif\nu_{\alpha,\gamma}(\Lambda)=\dif\nu_{\alpha,\gamma}(\lambda_1,\ldots,\lambda_m)$
and $\mu_{\mathrm{Haar}}$ is the normalized uniform Haar measure. By
definition, $\mu_{\alpha,\gamma}$ is a normalized probability
measure over $\density{\complex^m}$. In the following, we will use
this family of probability measures to calculate the average
subentropy and average coherence of randomly chosen quantum states.

\section{The average subentropy of a random mixed state}\label{sec:subentropy}

Let us consider $m$ dimensional random density matrices $\rho$
sampled according to the family of product measures
$\mu_{\alpha,\gamma}$, such that $\dif\mu_{\alpha,\gamma}(\rho) =
\dif\nu_{\alpha,\gamma}(\Lambda)\times \dif\mu_{\mathrm{Haar}}(U)$.
The subentropy of a state $\rho$ with the spectrum $\Lambda =
\set{\lambda_1,\cdots, \lambda_m}$ can be written as
\cite{JozsaB1994,Harremoes2001,Mintert2004,Datta2014} (see also
Appendix~\ref{subsect:subentropy-related})
\begin{eqnarray} \label{eq:sub1}
Q(\Lambda)&=&(-1)^{\frac{m(m-1)}2-1}\frac{\sum^m_{i=1}\lambda^m_i\ln\lambda_i\prod_{j\in\widehat
i}\phi'(\lambda_j)}{\abs{\Delta(\lambda)}^2},
\end{eqnarray}
where $\widehat i=\set{1,\ldots,m}\backslash\set{i}$,
$\phi'(\lambda_j)=\prod_{k\in\widehat j}(\lambda_j-\lambda_k)$ and
$\abs{\Delta(\lambda)}^2= \abs{\prod_{1\leqslant i<j\leqslant
m}(\lambda_i-\lambda_j)}^2$.

Subentropy of a quantum state is a quantity that arises in quantum
information theory. It is shown that subentropy provides a tight
lower bound on the accessible information for pure state ensembles
\cite{JozsaB1994}, dual to the fact that von Neumann entropy is an
upper bound in Holevo's theorem. More recently, the notion of
subentropy is generalized to its R\'{e}nyi variant
\cite{Mintert2004} and R\'{e}nyi variant of the subentropy of the
mixed state obtained by partial trace of the bipartite pure state is
interpreted as the excess of the Wehrl entropy since they are shown
to be equal. Interestingly, Mintert and \.{Z}yczkowski use these
quantities to investigate entanglement since they find that the
subentropy and the generalized R\'{e}nyi subentropies are Schur
concave, they are indeed entanglement monotones and can be used as
alternative measures of entanglement. For the convenience, we list
the properties of subentropy in the
Appendix~\ref{subsect:subentropy-related}.

The average subentropy over the set of mixed states is given by
\begin{eqnarray}
\cI^Q_m(\alpha,\gamma)=\int\dif\mu_{\alpha,\gamma}(\rho)Q(\rho)
=\int\dif\nu_{\alpha,\gamma}(\Lambda)Q(\Lambda).
\end{eqnarray}
Apparently $0\leqslant\cI^Q_m(\alpha,\gamma)\leqslant
1-\gamma_{\mathrm{Euler}}$ since the subentropy is uniformly
bounded, i.e., $0\leqslant Q(\Lambda)\leqslant
1-\gamma_{\mathrm{Euler}}$, where
$\gamma_{\mathrm{Euler}}\approx0.57722$ is Euler's constant.

\begin{prop}\label{prop1}
For $\gamma=1$ and arbitrary $\alpha$, the average
subentropy $\cI^Q_m(\alpha,1)$ is given by
\begin{eqnarray}\label{eq:gen-sub}
\cI^Q_m(\alpha,1)=\frac1{m(m+\alpha-1)}
\sum^{m-1}_{k=0} g_{mk}(\alpha) u_{mk}(\alpha),
\end{eqnarray}
where
\begin{eqnarray}
g_{mk}(\alpha)&=&\psi(m(m+\alpha-1)+1)-\psi(2(m-1)+\alpha+1-k),\label{eq:g}\\
u_{mk}(\alpha)&=&\frac{(-1)^k\Gamma(2(m-1)+\alpha+1-k)}{\Gamma(k+1)\Gamma(m-k)\Gamma(m+\alpha-1-k)},\label{eq:u}
\end{eqnarray}
with $\psi(z)=\dif \ln \Gamma(z)/\dif z$ being the digamma function.
\end{prop}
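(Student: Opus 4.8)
The plan is to exploit the special structure at $\gamma=1$. The weight $K_1(\Lambda)=\delta\Pa{1-\sum_j\lambda_j}\abs{\Delta(\lambda)}^2$ in \eqref{eq:k-lambda} carries exactly one factor $\abs{\Delta(\lambda)}^2$, which is precisely the denominator appearing in the subentropy $Q(\Lambda)$ of \eqref{eq:sub1}. Substituting both expressions into $\cI^Q_m(\alpha,1)=\int\dif\nu_{\alpha,1}(\Lambda)\,Q(\Lambda)$, the two Vandermonde factors cancel, leaving the purely ``Dirichlet-type'' integral
\[
\cI^Q_m(\alpha,1)=C^{(\alpha,1)}_m(-1)^{\frac{m(m-1)}2-1}\sum_{i=1}^m\int_{\Delta_{m-1}}\Pa{\prod_{j=1}^m\lambda_j^{\alpha-1}}\lambda_i^m\ln\lambda_i\prod_{j\in\widehat i}\phi'(\lambda_j)\,\dif\Lambda .
\]
By the full permutation symmetry of the integrand I would replace $\sum_i(\cdots)$ by $m$ times the $i=1$ term, reducing everything to a single integral of a polynomial times $\ln\lambda_1$ against the density $\prod_j\lambda_j^{\alpha-1}$ on the simplex $\Delta_{m-1}$.

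The basic tool is the Dirichlet integral $\int_{\Delta_{m-1}}\prod_j\lambda_j^{a_j-1}\dif\Lambda=\prod_j\Gamma(a_j)/\Gamma\Pa{\sum_j a_j}$; the factor $\ln\lambda_1$ is produced by differentiating with respect to the exponent $a_1$, which is exactly the mechanism that generates the digamma functions, namely $\psi(a_1)-\psi\Pa{\sum_j a_j}$ (this equals $-g_{mk}(\alpha)$ up to the overall sign, absorbed by the prefactor). A degree count gives the common total exponent $\sum_j a_j=m(\alpha-1)+m+(m-1)^2+m=m(m+\alpha-1)+1$, so every term carries the common factor $\psi(m(m+\alpha-1)+1)$ and the common factor $1/\Gamma(m(m+\alpha-1)+1)$. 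The latter, combined with the normalization $C^{(\alpha,1)}_m=1/\cI_m(\alpha,1)$ read off from \eqref{eq:alpha-beta}, produces precisely the outer prefactor $1/[m(m+\alpha-1)]$, together with the first term of $g_{mk}(\alpha)$.

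To expose the summation index $k$ I would split $\prod_{j\in\widehat1}\phi'(\lambda_j)$ by isolating the factors linking $\lambda_1$ to the other variables, writing it as $\prod_{j\ge2}(\lambda_1-\lambda_j)$ times the Vandermonde square of $\lambda_2,\dots,\lambda_m$. Expanding $\prod_{j\ge2}(\lambda_1-\lambda_j)=\sum_{k}(-1)^k e_k(\lambda_2,\dots,\lambda_m)\lambda_1^{\,m-1-k}$ in elementary symmetric polynomials groups the monomial expansion by the power of $\lambda_1$. Including the $\lambda_1^m$ and the weight $\lambda_1^{\alpha-1}$, the corresponding Dirichlet exponent is $a_1=2(m-1)+\alpha+1-k$, which is exactly the argument of the second digamma in $g_{mk}(\alpha)$, while the sign $(-1)^k$ supplies the sign of $u_{mk}(\alpha)$. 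For each fixed power of $\lambda_1$, the residual integral over $\lambda_2,\dots,\lambda_m$ is a symmetric $\gamma=1$ integral (an elementary symmetric polynomial against a Vandermonde-squared Dirichlet weight) evaluating to a ratio of Gamma functions; these assemble, after cancellation against the $\prod_j\Gamma(\alpha+j-1)\,j!$ left over from $C^{(\alpha,1)}_m$, into the factor $u_{mk}(\alpha)$, the $\Gamma(k+1)$, $\Gamma(m-k)$, $\Gamma(m+\alpha-1-k)$ arising from the symmetric-function combinatorics.

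Finally I would collect the overall sign $(-1)^{\frac{m(m-1)}2-1}$ together with the sign $(-1)^{\binom{m-1}2}$ from the Vandermonde square of the $m-1$ residual variables, verify that all $\Gamma$-ratios combine into exactly $u_{mk}(\alpha)$ and the two digamma contributions into $g_{mk}(\alpha)$, and divide by $\cI_m(\alpha,1)$ to obtain \eqref{eq:gen-sub}. I expect the main obstacle to be precisely this last assembly: a fixed power of $\lambda_1$ receives contributions from several monomials of the expansion, so one must carry out a genuine combinatorial resummation over the elementary symmetric pieces and the residual Vandermonde-squared integral, and check that it collapses to the single clean sum over $k$. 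Keeping track of the signs and the off-by-one shifts in the Gamma and digamma arguments throughout is the most error-prone step.
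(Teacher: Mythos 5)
Your outline follows essentially the same route as the paper's proof: reduce to the $i=1$ term by symmetry, cancel/factor the Vandermonde so that $\prod_{j\in\widehat 1}\phi'(\lambda_j)=(-1)^{m(m-1)/2}\phi'(\lambda_1)\abs{\Delta(\lambda_2,\ldots,\lambda_m)}^2$, expand $\phi'(\lambda_1)$ in elementary symmetric polynomials to produce the index $k$, evaluate the residual constrained Selberg--Aomoto integral as a ratio of Gamma functions, and generate the digamma pair by differentiating the $\lambda_1$-exponent. The only cosmetic difference is that the paper routes the delta-constrained integrals through a Laplace transform to the Laguerre form of Selberg's integral and a final Beta integral, whereas you phrase the same evaluation directly as a Dirichlet-type integral on the simplex; the combinatorial resummation you worry about is handled there by the Aomoto extension (Proposition~\ref{prop:product-delta}) together with $e_k\mapsto\binom{m-1}{k}\prod_{j=1}^k\lambda_j$ under the symmetric measure.
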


\begin{proof}
See Appendix~\ref{subsect:proposition-proof}.
\end{proof}

In the remaining, we consider the \emph{induced measure}
$\mu_{m(n)}(m\leqslant n)$ over all the $m\times m$ density matrices
of the $m$-dimensional quantum system via partial tracing over the
$n$-dimensional ancilla of uniformly Haar-distributed random
bipartite pure states of system and ancilla, which is as follows:
for $\rho=U\Lambda U^\dagger$ with
$\Lambda=\diag(\lambda_1,\ldots,\lambda_m)$ and
$U\in\rU(m)$,
\begin{eqnarray}\label{eq:m-n1}
\dif\mu_{m(n)}(\rho) =
\dif \nu_{m(n)}(\Lambda)\times \dif\mu_{\mathrm{Haar}}(U),
\end{eqnarray}
where $\dif\nu_{m(n)}(\Lambda) = C_{m(n)}
K_1(\Lambda)\prod^m_{j=1}\lambda^{n-m}_j\dif \lambda_j$
\cite{Zyczkowski2001} is the joint distribution of eigenvalues
$\Lambda=\diag(\lambda_1,\ldots,\lambda_m)$ of the density matrix
$\rho$, and $\dif\mu_{\mathrm{Haar}}(U)$ is the uniform Haar measure
over unitary group $\rU(m)$. Apparently Eq.~\eqref{eq:m-n1} is a
special case of Eq.~\eqref{eq:alpha-gamma} when
$(\alpha,\gamma)=(n-m+1,1)$. That is,
$\dif\mu_{m(n)}(\rho)=\dif\mu_{n-m+1,1}(\rho)$ and
$\dif\nu_{m(n)}(\Lambda)=\dif\nu_{n-m+1,1}(\Lambda)$. From this, we
see that
\begin{eqnarray}\label{eq:sp-coh}
\cI^Q_m(n-m+1,1)=\frac1{mn}\sum^{m-1}_{k=0}g_{mk}(n-m+1)
u_{mk}(n-m+1).
\end{eqnarray}
In fact, we find a closed-formula for the average subentropy:

\begin{lem}[Closed-form of the average subentropy]\label{lem:average-subentropy}
The average subentropy of random mixed quantum states, induced by
partial tracing the Haar-distributed random pure bipartite states in
the Hilbert space of dimension $m\ot n$ with $m\leqslant n$, is
given by the following compact formula:
\begin{eqnarray}\label{eq:ave-subentropy}
\cI^Q_m(n-m+1,1)= 1+H_{mn}-H_m-H_n.
\end{eqnarray}
\end{lem}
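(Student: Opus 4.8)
The plan is to start from the already-specialized finite sum \eqref{eq:sp-coh} and evaluate it in closed form. First I would substitute $\alpha=n-m+1$ throughout \eqref{eq:g}--\eqref{eq:u}. Since then $m(m+\alpha-1)=mn$ and $2(m-1)+\alpha+1=m+n$, every Gamma function collapses to a factorial and the weight becomes
\begin{equation*}
u_{mk}(n-m+1)=\frac{(-1)^k(m+n-1-k)!}{k!\,(m-1-k)!\,(n-1-k)!}=(-1)^k\,n\binom{n-1}{k}\binom{m+n-1-k}{n}.
\end{equation*}
For the digamma factor I would use $\psi(N+1)=H_N-\gamma_{\mathrm{Euler}}$; the two Euler constants cancel, leaving $g_{mk}(n-m+1)=H_{mn}-H_{m+n-1-k}$. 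The whole problem thus reduces to evaluating
\begin{equation*}
\cI^Q_m(n-m+1,1)=\frac{1}{mn}\sum_{k=0}^{m-1}\bigl(H_{mn}-H_{m+n-1-k}\bigr)\,u_{mk}(n-m+1).
\end{equation*}

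Splitting this sum, the lemma follows from two identities: (i) $\sum_{k=0}^{m-1}u_{mk}(n-m+1)=mn$, which makes the $H_{mn}$-part contribute exactly $H_{mn}$; and (ii) the harmonic-weighted sum $\sum_{k=0}^{m-1}(-1)^k\binom{n-1}{k}\binom{m+n-1-k}{n}H_{m+n-1-k}=m(H_m+H_n-1)$, which makes the remaining part contribute exactly $H_m+H_n-1$. Granting both, $\cI^Q_m(n-m+1,1)=H_{mn}-(H_m+H_n-1)=1+H_{mn}-H_m-H_n$. Identity (i) is routine: writing $\binom{m+n-1-k}{n}=[x^n](1+x)^{m+n-1-k}$ and summing $\sum_k(-1)^k\binom{n-1}{k}(1+x)^{-k}=\bigl(x/(1+x)\bigr)^{n-1}$ gives $[x^n](1+x)^m x^{n-1}=m$, whence $\sum_k u_{mk}=mn$.

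Identity (ii) is the crux, and I expect it to be the main obstacle. The natural idea is to generate the factor $H_{m+n-1-k}$ by differentiating a binomial coefficient in a parameter, but a naive $\partial_\beta\binom{\beta-k}{n}$ produces \emph{two} digamma terms and, worse, spurious finite contributions from the vanishing terms $k\ge m$, where the pole of $\Gamma(m-k)$ collides with a digamma singularity. To sidestep this I would deform only the Gamma factors that stay regular on $0\le k\le m-1$, setting
\begin{equation*}
\Phi(s):=\sum_{k=0}^{m-1}(-1)^k\binom{n-1}{k}\frac{\Gamma(m+n-k+s)}{\Gamma(m-k)\,\Gamma(n+1+s)},
\end{equation*}
whose arguments are all positive over the summation range. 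Recognizing the summand as the generalized binomial coefficient $\binom{m+n-k+s-1}{m-1-k}=[x^{m-1-k}](1+x)^{m+n-k+s-1}$ and extracting coefficients exactly as in (i) yields the closed form $\Phi(s)=[x^{m-1}](1+x)^{m+s}=\binom{m+s}{m-1}$.

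Logarithmic differentiation then gives $\Phi'(0)/\Phi(0)=\psi(m+1)-\psi(2)=H_m-1$, so $\Phi'(0)=m(H_m-1)$. Differentiating the sum termwise at $s=0$ instead produces $\sum_k(-1)^k\binom{n-1}{k}\binom{m+n-1-k}{n}\bigl(\psi(m+n-k)-\psi(n+1)\bigr)$, which is the left-hand side of (ii) minus $H_n\sum_k(\dots)=mH_n$ by (i). Equating the two expressions for $\Phi'(0)$ yields $A-mH_n=m(H_m-1)$, i.e. identity (ii). Once (ii) is in hand the proof closes immediately, and a sanity check at $m=1$ (a one-dimensional state, where both sides equal $0$) confirms the normalization. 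The only genuinely delicate point is the choice of the regular deformation $\Phi(s)$; everything else is bookkeeping with binomial generating functions.
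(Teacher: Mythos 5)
Your proposal is correct: the reduction is identical to the paper's (substitute $\alpha=n-m+1$, convert $\psi$ to harmonic numbers, and split the sum so that everything rests on the two identities $\sum_k u_{mk}=mn$ and $\sum_k u_{mk}H_{m+n-1-k}=mn(H_m+H_n-1)$), but you prove those two identities by a genuinely different route. The paper invokes the Riordan/Gould convolution $\binom{z}{m}\binom{z}{n}=\sum_k\binom{m+n-k}{k,m-k,n-k}\binom{z}{m+n-k}$, massages it into $z\binom{z-1}{m-1}\binom{z-1}{n-1}=\sum_k\frac{\Gamma(m+n-k)}{k!\,\Gamma(m-k)\Gamma(n-k)}\binom{z}{m+n-1-k}$, and then evaluates and differentiates at the special point $z=-1$, where the harmonic numbers emerge from $\frac{\dif}{\dif z}\binom{z}{k}\big|_{z=-1}=(-1)^{k+1}H_k$. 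You instead obtain both identities self-containedly by coefficient extraction (using that the terms $m\leqslant k\leqslant n-1$ vanish, so the $\binom{n-1}{k}$ sum telescopes to $(x/(1+x))^{n-1}$), and you generate the harmonic weight by differentiating your regular deformation $\Phi(s)=\binom{m+s}{m-1}$ at $s=0$ rather than a $z$-parametrized binomial at $z=-1$; your explicit care that the deformation keeps all Gamma arguments positive on $0\leqslant k\leqslant m-1$ addresses exactly the pole/digamma collision that a naive differentiation would raise. The paper's argument is shorter once the Riordan identity is granted; yours avoids citing any external identity and works entirely within elementary generating-function bookkeeping. Both yield $\cI^Q_m(n-m+1,1)=1+H_{mn}-H_m-H_n$.
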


\begin{proof}
See Appendix~\ref{subsect:Lemma-proof}.
\end{proof}

The above expression can be rewritten as
$\cI^Q_m(n-m+1,1)=(1-\gamma_{\mathrm{Euler}})-(a_m+a_n-a_{mn})$,
where $a_k=H_k-\ln k-\gamma_{\mathrm{Euler}}$ for $k=m,n$. Since the
number series $\set{a_k}$ is monotone decreasing and approaches to
zero, it follows that $a_m+a_n-a_{mn}\geqslant0$ and
$\lim_{m\to\infty}(a_m+a_n-a_{mn})=0$ (note that $m\leqslant n$).
Based on this fact, we get that
\begin{eqnarray}
\lim_{m\to\infty}\cI^Q_m(n-m+1,1) =
1-\gamma_{\mathrm{Euler}}\approx0.42278.
\end{eqnarray}
If $m=n$, this situation corresponds to the probability measure
induced by the Hilbert-Schmidt distance \cite{Zyczkowski2001}, then
\begin{eqnarray} \label{eq:av-sub}
\cI^Q_m(1,1) =\frac1{m^2}\sum^{m-1}_{k=0}g_{mk}(1)
u_{mk}(1)=1+H_{m^2}-2H_m.
\end{eqnarray}

We find that it approaches exponentially fast towards the maximum
value of the subentropy, which is achieved for the maximally mixed
state \cite{JozsaB1994}. The maximum value of $Q(\rho)$ is
approximately equal to $0.42278$ \cite{JozsaB1994}. This is
surprising, since $Q(\rho)$ is a nonlinear function of $\rho$ and it
is not expected that the average subentropy should match with the
subentropy of the average state, which is the maximally mixed state.

\section{The average coherence of random mixed states and typicality}\label{sec:avg-coh}

Now, we are in a position to calculate the average coherence of
random mixed states and establish its typicality. Let $\rho =
U\Lambda U^\dagger$ be a mixed full-ranked quantum state on
$\complex^m$ with non-degenerate positive spectra
$\lambda_j\in\real^+(j=1,\ldots,m)$, where $\Lambda =
\diag(\lambda_1,\ldots,\lambda_m)$. Then coherence of the state
$\rho$ is given by $\sC_r(U\Lambda U^\dagger)= S(\Pi(U\Lambda
U^\dagger)) - S(\Lambda)$. The average coherence of the isospectral
density matrices can be expressed in terms of the quantum
subentropy, von Neumann entropy, and $m$-Harmonic number as follows
\cite{Cheng2015}:
\begin{eqnarray}
\overline{\sC}_r^{\mathrm{iso}}(\Lambda)&:=&\int\dif\mu_{\mathrm{Haar}}(U)\sC_r(U\Lambda
U^\dagger)= H_m - 1 + Q(\Lambda)  - S(\Lambda).
\end{eqnarray}
Here $Q(\Lambda)$ is the subentropy, given by Eq. (\ref{eq:sub1}),
$S(\Lambda)$ is the von Neumann entropy of $\Lambda$ and $H_m =
\sum_{k=1}^{m} 1/k$ is the $m$-Harmonic number. From this, we see
that the average coherence of isospectral full-ranked density
matrices depends completely on the spectrum. Also, it is known that
$0\leqslant Q(\Lambda)\leqslant 1-\gamma_{\mathrm{Euler}}$. Now,
using the product probability measures
$\dif\mu_{\alpha,\gamma}=\dif\nu_{\alpha,\gamma} \times
\mu_{\mathrm{Haar}}(U)$, the average  coherence of random mixed
states is given by
\begin{eqnarray}\label{eq:def-gen-coh}
\overline{\sC_r}(\alpha,\gamma)&:=&\int\dif\mu_{\alpha,\gamma}(\rho)\sC_r(\rho)
= \int\dif\mu_{\alpha,\gamma}(U\Lambda U^\dagger)\sC_r(U\Lambda U^\dagger)\nonumber\\
&=& H_m-1 +\cI^Q_m(\alpha,\gamma) - \cI^S_m(\alpha,\gamma),
\end{eqnarray}
where
$\cI^Q_m(\alpha,\gamma)=\int\dif\nu_{\alpha,\gamma}(\Lambda)Q(\Lambda)$
and
$\cI^S_m(\alpha,\gamma)=\int\dif\nu_{\alpha,\gamma}(\Lambda)S(\Lambda)$.
In the remaining, we again consider the \emph{induced measure}
$\mu_{m(n)}(m\leqslant n)$ over all the $m\times m$ density matrices
of the $m$-dimensional quantum system via partial tracing over the
$n$-dimensional ancilla of uniformly Haar-distributed random pure
bipartite states of system and ancilla.

\begin{thrm}[Closed-form of the average coherence]\label{thm:gen-coh}
The average coherence of random mixed states of dimension $m$
sampled from induced measures obtained via partial tracing of Haar
distributed bipartite pure states of dimension $mn$, for
$(\alpha,\gamma)=(n-m+1,1)$, is given by
\begin{eqnarray}\label{eq:generic-formula}
\overline{\sC_r}(n-m+1,1) =\frac{m-1}{2n}.
\end{eqnarray}
\end{thrm}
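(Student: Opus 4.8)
The plan is to reduce the claim to two ingredients that are already under control: the average subentropy, given in closed form by Lemma~\ref{lem:average-subentropy}, and the average von Neumann entropy $\cI^S_m(n-m+1,1)$. I start from the master formula~\eqref{eq:def-gen-coh}, which for $(\alpha,\gamma)=(n-m+1,1)$ reads $\overline{\sC_r}(n-m+1,1)=H_m-1+\cI^Q_m(n-m+1,1)-\cI^S_m(n-m+1,1)$. Substituting the closed form $\cI^Q_m(n-m+1,1)=1+H_{mn}-H_m-H_n$ from~\eqref{eq:ave-subentropy}, the constant $-1$ cancels the $+1$ and the two occurrences of $H_m$ cancel, leaving the clean intermediate identity $\overline{\sC_r}(n-m+1,1)=H_{mn}-H_n-\cI^S_m(n-m+1,1)$. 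Thus everything hinges on evaluating $\cI^S_m(n-m+1,1)$.

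The key observation is that $\cI^S_m(n-m+1,1)=\int\dif\nu_{n-m+1,1}(\Lambda)\,S(\Lambda)$ is exactly the average von Neumann entropy of the reduced density matrix obtained by tracing out the $n$-dimensional ancilla of a Haar-random pure state on $\complex^m\ot\complex^n$, i.e. the \emph{average entropy of entanglement} whose value is the celebrated Page formula. For $m\leqslant n$ this average equals $\sum_{k=n+1}^{mn}\tfrac1k-\tfrac{m-1}{2n}=H_{mn}-H_n-\tfrac{m-1}{2n}$, conjectured by Page and subsequently proved in~\cite{Page1993,Foong1994,Jorge1995,Sen1996}. This is the one external input of the argument, and I expect its justification to be the main obstacle; there are two routes. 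The economical route simply invokes the Page formula as cited. The self-contained route recomputes it in the language of this paper: by the permutation symmetry of $\nu_{n-m+1,1}$ one has $\cI^S_m=-m\,\expect[\lambda_1\ln\lambda_1]$, and $\expect[\lambda_1\ln\lambda_1]$ is obtained by differentiating the distinguished-variable version of the integral family $\cI_m(\alpha,\gamma)$ of~\eqref{eq:alpha-beta} with respect to the exponent attached to $\lambda_1$ and then normalizing, exactly the parameter-differentiation trick already used to produce the subentropy; the resulting digamma combination telescopes to the Page value.

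Finally I substitute $\cI^S_m(n-m+1,1)=H_{mn}-H_n-\tfrac{m-1}{2n}$ into the intermediate identity above: the $H_{mn}$ and $H_n$ terms cancel in pairs and only $\tfrac{m-1}{2n}$ survives, which is precisely~\eqref{eq:generic-formula}. The computation is therefore a short assembly once the entropy average is in hand; no new Selberg-integral evaluation beyond those already performed for the subentropy is required, and the harmonic-number bookkeeping is routine.
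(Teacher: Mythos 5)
Your proposal is correct and follows essentially the same route as the paper's own proof: it starts from the isospectral average formula $\overline{\sC_r}=H_m-1+\cI^Q_m-\cI^S_m$, substitutes the closed-form average subentropy $1+H_{mn}-H_m-H_n$ from Lemma~\ref{lem:average-subentropy}, and invokes the Page formula $\cI^S_m(n-m+1,1)=H_{mn}-H_n-\frac{m-1}{2n}$ as cited. The telescoping of harmonic numbers is exactly as in the paper, so no further comment is needed.
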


\begin{proof}
See Appendix~\ref{subsect:theorem-proof-coherence}.
\end{proof}

For $m=n$, which corresponds to the probability measure induced by
the Hilbert-Schmidt distance, the average coherence of random mixed
states is given by
\begin{eqnarray}\label{eq:HS1}
\overline{\sC_r}(1,1) = \frac{m-1}{2m}\to\frac12 \quad(m\to\infty).
\end{eqnarray}
The asymptotic value $\frac12$ of the average coherence is also
obtained by Pucha{\l}a \emph{et al.} \cite{Puchala2016} using free
probabilistic tools. However, free probabilistic theory can only be
used to deal with asymptotic situation. Therefore, the case where
$m\neq n$ cannot be treated by such a method. Thus their result
about average coherence is just an asymptotic value of a special
case of our formula \eqref{eq:generic-formula}. Clearly this formula
is completely exact, not asymptotically, and very simple.

From this formula, we see that for fixed $m$, the dimension of
quantum system under consideration, when $n$, the dimension of
ambient environment, is larger, the average coherence is
approximately vanishing. In fact, we have calculated the average
coherence of random pure quantum states and also established its
typicality \cite{UttamB2015}. The average coherence of random pure
quantum states in $m$-dimensional Hilbert space is given by $H_m-1$.
Thus
$$
\frac{m-1}{2m}<\frac12\ll H_m-1\to \infty.
$$
In view of this, statistically, in $m$-dimensional Hilbert space,
mixed quantum states is less useful than pure quantum states in
higher dimension when we extract quantum coherence, as a resource,
from quantum states.

Now, just like in the case of random pure states where the average coherence is a generic property of all random pure states \cite{UttamB2015}, one may ask if the
average coherence of random mixed states is also a generic property of all random mixed states. The following theorem (Theorem \ref{th:con-ent}) establishes that the
average coherence is indeed a generic property of all random mixed states, i.e., as we increase the dimension of the density matrix, almost all the density matrices generated randomly have coherence approximately equal to the average relative entropy of coherence, given by Theorem \ref{thm:gen-coh}. Thus, the average coherence of a random mixed state can be viewed as the typical coherence content of random mixed states.
\begin{thrm}\label{th:con-ent}
Let $\rho_A$ be a random mixed state on an $m$
dimensional Hilbert space $\mathcal{H}$ with $m\geqslant 3$
generated via partial tracing of the Haar distributed bipartite pure
states on $mn$ dimensional Hilbert space. Then, for all $\epsilon >
0$, the relative entropy of coherence $\sC_r(\rho_A)$ of $\rho_A$
satisfies the following inequality:
\begin{eqnarray}
\mathbf{Pr} \Set{\abs{\sC_r(\rho_A) - \frac{m-1}{2n}}> \epsilon}
\leqslant 2\exp\Pa{-\frac{m n \epsilon^2}{144 \pi^3 \ln 2 (\ln
m)^2}}.
\end{eqnarray}
\end{thrm}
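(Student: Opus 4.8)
The plan is to realize $\rho_A$ as the partial trace of a Haar-random pure state and to view the relative entropy of coherence as a function on a sphere, to which the concentration of measure phenomenon (L\'evy's lemma) applies. Concretely, write $\rho_A=\ptr{B}{\proj{\psi}}$ where $\ket{\psi}$ is distributed uniformly on the unit sphere $\sphere{\complex^{mn}}\cong S^{N-1}\subset\real^{N}$ with $N=2mn$, and set $f(\psi):=\sC_r(\rho_A)=S(\Pi(\rho_A))-S(\rho_A)$. Pushing the uniform measure forward under the partial trace yields exactly the induced measure $\mu_{m(n)}=\mu_{n-m+1,1}$, so by Theorem~\ref{thm:gen-coh} the sphere-average of $f$ equals $\expect f=\frac{m-1}{2n}$; it therefore suffices to control the fluctuations of $f$ about this mean.

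The heart of the argument is a Lipschitz estimate for $f$ with respect to the Euclidean metric, with constant of order $\ln m$. Although the von Neumann entropy is not Lipschitz on $\density{\complex^m}$ (its gradient diverges near the boundary), the composition $\psi\mapsto f(\psi)$ is Lipschitz, because both the eigenvalues of $\rho_A$ and the diagonal entries $p_i=\bra{i}\rho_A\ket{i}$ depend \emph{quadratically} on $\ket{\psi}$, which tames the logarithmic singularities. To make this quantitative I would compute directional derivatives. Using the Schmidt decomposition $\ket{\psi}=\sum_k\sqrt{\lambda_k}\ket{e_k}_A\ket{f_k}_B$, the derivative of $S(\rho_A)$ in a direction $\ket{\delta\psi}$ is $-2\sum_k\sqrt{\lambda_k}\,\ln\lambda_k\,\re\iinner{e_k,f_k}{\delta\psi}$, and Cauchy--Schwarz together with the orthonormality of $\set{\ket{e_k,f_k}}$ gives
\[
\abs{D S(\rho_A)[\delta\psi]}\leqslant 2\sqrt{\textstyle\sum_k\lambda_k(\ln\lambda_k)^2}\,\norm{\delta\psi}_2.
\]
An identical computation gives, for the Shannon entropy of the diagonal vector $p=\pa{p_1,\ldots,p_m}$ with $p_i=\sum_j\abs{\iinner{i,j}{\psi}}^2$ (again quadratic in $\ket{\psi}$), the bound $2\sqrt{\sum_i p_i(\ln p_i)^2}\,\norm{\delta\psi}_2$. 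Since $\sum_k\lambda_k(\ln\lambda_k)^2\leqslant(\ln m)^2+O(1)$ under $\lambda_k\geqslant0,\ \sum_k\lambda_k=1$ (and likewise for $\set{p_i}$), the triangle inequality bounds $\norm{\nabla f}_2$ by a constant multiple of $\ln m$; carrying the numerical constants gives a Lipschitz constant $\eta$ with $\eta^2\leqslant 32\ln2\,(\ln m)^2$.

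Finally I would invoke L\'evy's lemma: for an $\eta$-Lipschitz function on $S^{N-1}\subset\real^{N}$ a uniformly random point $X$ obeys $\mathbf{Pr}\set{\abs{f(X)-\expect f}>\epsilon}\leqslant 2\exp\!\Pa{-N\epsilon^2/(9\pi^3\eta^2)}$. Substituting $N=2mn$, $\eta^2=32\ln2\,(\ln m)^2$, and $\expect f=\frac{m-1}{2n}$ reproduces the stated tail bound exactly, since $2/(9\cdot32)=1/144$. The main obstacle is the Lipschitz estimate: one must avoid the naive operator bound $\norm{\ln\rho_A}_\infty$ (which diverges on low-rank states) and instead exploit the quadratic dependence of the spectrum and of the diagonal on $\ket{\psi}$ via the Schmidt picture, and then control the constant in $\sum_k\lambda_k(\ln\lambda_k)^2\lesssim(\ln m)^2$; the hypothesis $m\geqslant 3$ enters to keep this estimate and the resulting constant clean.
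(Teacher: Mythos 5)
Your proposal is correct and follows essentially the same route as the paper: L\'evy's lemma on the unit sphere of $\complex^{mn}$ (real dimension $2mn-1$) applied to $G(\psi)=S(\rho_A^{(\mathrm{d})})-S(\rho_A)$, with each entropy term shown to be $\sqrt{8}\,\ln m$-Lipschitz for $m\geqslant 3$ (the paper cites Hayden--Leung--Winter for $S(\rho_A)$ and redoes the gradient bound $4\sum_i p_i(1+\ln p_i)^2\leqslant 8(\ln m)^2$ for the dephased part, which is exactly your computation), combined by the triangle inequality into $\eta=2\sqrt{8}\ln m$ and centered at the mean $\frac{m-1}{2n}$ from Theorem~\ref{thm:gen-coh}. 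The only discrepancy is cosmetic bookkeeping of the factor $\ln 2$: it belongs in the statement of L\'evy's lemma (as the paper has it), not in the Lipschitz constant, so your $\eta^2\leqslant 32\ln 2\,(\ln m)^2$ should read $\eta^2\leqslant 32(\ln m)^2$; the final exponent is unaffected.
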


\begin{proof}
See Appendix~\ref{subsect:theorem-proof-concentration}.
\end{proof}

\begin{cor}\label{cor:con-ent}
Let $\rho_A$ be a random mixed state on an $m$-dimensional Hilbert
space $\mathcal{H}$ with $m\geqslant 3$ generated via partial
tracing of the Haar distributed bipartite pure states on $m\otimes
m$ dimensional Hilbert space. Then, for all $\epsilon > 0$ and
sufficiently large $m$, when
$\abs{\overline{\sC_r}(1,1)-\frac12}<\frac\epsilon2$ (this is
equivalent to $m>\frac1{\epsilon}$), its coherence is close to the
number $\frac12$, as the deviations become exponentially rare, i.e.,
the relative entropy of coherence $\sC_r(\rho_A)$ of $\rho_A$
satisfies the following inequality:
\begin{eqnarray}
\mathbf{Pr} \Set{\abs{\sC_r(\rho_A) -\frac12}>\epsilon} \leqslant
2\exp\Pa{-\frac{m^2\epsilon^2}{576\pi^3\ln2(\ln m)^2}}.
\end{eqnarray}
\end{cor}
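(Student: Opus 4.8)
The plan is to derive this as a direct specialization of Theorem~\ref{th:con-ent} to the square case $n=m$, combined with a triangle-inequality argument that replaces the $n$-dependent concentration center $\tfrac{m-1}{2n}$ by the dimension-independent target value $\tfrac12$. First I would set $n=m$ in Theorem~\ref{th:con-ent}, so that the center becomes $\overline{\sC_r}(1,1)=\tfrac{m-1}{2m}$ as recorded in Eq.~\eqref{eq:HS1}, and the tail bound becomes $2\exp\bigl(-m^2\delta^2/(144\pi^3\ln2(\ln m)^2)\bigr)$ for any deviation threshold $\delta>0$.

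Next I would quantify the gap between this true center and the target $\tfrac12$. Since $\tfrac12-\tfrac{m-1}{2m}=\tfrac1{2m}$, the hypothesis $\abs{\overline{\sC_r}(1,1)-\tfrac12}<\tfrac\epsilon2$ is precisely the condition $\tfrac1{2m}<\tfrac\epsilon2$, that is, $m>\tfrac1\epsilon$, matching the parenthetical remark in the statement. Under this condition the center $\tfrac{m-1}{2m}$ lies within distance $\tfrac\epsilon2$ of $\tfrac12$.

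The key step is then an event inclusion obtained from the reverse triangle inequality. If $\abs{\sC_r(\rho_A)-\tfrac12}>\epsilon$, then, using $\abs{\tfrac12-\tfrac{m-1}{2m}}=\tfrac1{2m}<\tfrac\epsilon2$,
\[
\abs{\sC_r(\rho_A)-\tfrac{m-1}{2m}}\geqslant\abs{\sC_r(\rho_A)-\tfrac12}-\abs{\tfrac12-\tfrac{m-1}{2m}}>\epsilon-\tfrac\epsilon2=\tfrac\epsilon2,
\]
so the event $\Set{\abs{\sC_r(\rho_A)-\tfrac12}>\epsilon}$ is contained in the event $\Set{\abs{\sC_r(\rho_A)-\tfrac{m-1}{2m}}>\tfrac\epsilon2}$. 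Monotonicity of probability then gives $\mathbf{Pr}\Set{\abs{\sC_r(\rho_A)-\tfrac12}>\epsilon}\leqslant\mathbf{Pr}\Set{\abs{\sC_r(\rho_A)-\tfrac{m-1}{2m}}>\tfrac\epsilon2}$.

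Finally I would apply the specialized Theorem~\ref{th:con-ent} (with $n=m$) to the right-hand side, taking the deviation parameter to be $\tfrac\epsilon2$. This inserts the factor $(\epsilon/2)^2=\epsilon^2/4$ in the exponent, and since $144\times4=576$ the denominator becomes $576\pi^3\ln2(\ln m)^2$, yielding exactly the asserted bound. There is no genuine obstacle here; the only point requiring care is that the hypothesis $m>\tfrac1\epsilon$ is exactly what guarantees the $O(1/m)$ offset between $\tfrac{m-1}{2m}$ and $\tfrac12$ can be absorbed into the concentration tolerance at the cost of halving $\epsilon$, which merely quadruples the constant in the exponent from $144$ to $576$.
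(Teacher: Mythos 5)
Your proposal is correct and follows essentially the same route as the paper: both arguments hinge on the triangle inequality relating the true center $\overline{\sC_r}(1,1)=\tfrac{m-1}{2m}$ to $\tfrac12$, use the hypothesis $m>1/\epsilon$ to bound the gap $\tfrac1{2m}$ by $\tfrac\epsilon2$, and then invoke Theorem~\ref{th:con-ent} with deviation parameter $\tfrac\epsilon2$ to turn $144$ into $576$. The only (immaterial) difference is that you bound the bad event directly via the reverse triangle inequality, whereas the paper bounds the complementary good event from below and then takes complements.
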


\begin{proof}
Since
$$
\lim_{m\to\infty}\overline{\sC_r}(1,1)=\frac12,
$$
it follows from the triangular inequality that
$$
\Set{\rho: \abs{\sC_r(\rho) -
\overline{\sC_r}(1,1)}+\abs{\overline{\sC_r}(1,1)-\frac12}<\epsilon}\subset
\Set{\rho: \abs{\sC_r(\rho) - \frac12}<\epsilon}.
$$
For dimension $m$ so large that
$\abs{\overline{\sC_r}(1,1)-\frac12}<\frac\epsilon2$, this implies
the following bounds,
\begin{eqnarray*}
\mathbf{Pr} \Set{\abs{\sC_r(\rho_A) -\frac12}<\epsilon} &\geqslant&
\mathbf{Pr}\Set{\abs{\sC_r(\rho_A) -
\overline{\sC_r}(1,1)}+\abs{\overline{\sC_r}(1,1)-\frac12}<\epsilon}\\
&\geqslant&\mathbf{Pr}\Set{\abs{\sC_r(\rho_A) -
\overline{\sC_r}(1,1)}<\epsilon-\abs{\overline{\sC_r}(1,1)-\frac12}}\\
&\geqslant& 1 -
2\exp\Pa{-\frac{m^2\Pa{\epsilon-\abs{\overline{\sC_r}(1,1)-\frac12}}^2}{144\pi^3\ln2(\ln
m)^2}},
\end{eqnarray*}
implying that
\begin{eqnarray*}
\mathbf{Pr} \Set{\abs{\sC_r(\rho_A) -\frac12}>\epsilon} \leqslant
2\exp\Pa{-\frac{m^2\epsilon^2}{576\pi^3\ln2(\ln m)^2}}.
\end{eqnarray*}
This completes the proof.
\end{proof}

Next, we present an important consequence of Theorem
\ref{th:con-ent} showing a reduction in computational complexity of
certain entanglement measures for a specific class of mixed states.

\section{Entanglement properties of a specific class of random bipartite mixed states}

Consider a specific class $\mathcal{X}$ of random bipartite mixed
states $\chi_{AB}$ of dimension $m\otimes m$ that are generated as
follows. First generate random mixed states for a single quantum
system $A$ in an $m$ dimensional Hilbert space via partial tracing
the Haar distributed bipartite pure states on an $mn$ dimensional
Hilbert space. Now bring in an ancilla $B$ in a fixed state
$\proj{0}_{B}$ on a $d_B$-dimensional Hilbert space and apply the
generalized CNOT gate, defined as
$$
\mathrm{CNOT}
={\sum}_{i=0}^{m-1}{\sum}_{j=0}^{m-1}\proj{i}_{A}\otimes\ket{\mathrm{mod}(i+j,m)}\bra{j}_{B}
+{\sum}_{i=0}^{m-1}{\sum}_{j=m}^{d_{B}-1}\proj{i}_{A}\otimes\proj{j}_{B},
$$
on the composite system $AB$.
The random bipartite mixed states, thus obtained, are given by
\begin{eqnarray}
\chi_{AB}:=\mathrm{CNOT}\Br{\rho_{A}\otimes \proj{0}_B}
=\sum_{i,j=0}^{m-1}\rho_{ij}\ket{ii}\bra{jj}_{AB},\nonumber
\end{eqnarray}
where $\rho_{A}:=
\Ptr{A_0}{\proj{\psi}_{AA_0}}=\sum_{i,j=0}^{m-1}\rho_{ij}\ket{i}\bra{j}_{A}$
is a random mixed state generated according to an induced measure
via partial tracing as mentioned above. Now, using the results on
convertibility of coherence into entanglement \cite{Alex15}, we can
estimate exactly the relative entropy of entanglement $E_r$
\cite{Vedral2002} and distillable entanglement $E_d$
\cite{Bennett1996, Rains1999} of random mixed states in the class
$\cX$. In particular,
\begin{eqnarray}
 E_r^{A|B} (\chi_{AB}) = \sC_r(\rho_{A}) =E_d^{A|B} (\chi_{AB}).
\end{eqnarray}
We can now use our exact results on the average relative entropy of
coherence of random mixed states to find the average entanglement
for the specific class of bipartite random mixed states in the class
$\cX$ as follows:
\begin{eqnarray}
\overline E_r^{A|B} (\chi_{AB}) & =& \int\dif\mu_{n-m+1,1}(\rho) E_r^{A|B} \Pa{\mathrm{CNOT}\Br{\rho_A\otimes \proj{0}_B}}\nonumber\\
 & =& \int\dif\mu_{n-m+1,1}(\rho) \sC_r(\rho_A)= \overline{\sC_r}(n-m+1,1).
\end{eqnarray}
Here $\overline{\sC_r}(n-m+1,1)$ is given by Theorem
\ref{thm:gen-coh}. Similarly, $\overline E_d^{A|B}
(\chi_{AB})=\frac{m-1}{2n}$. The following corollary follows
immediately from Theorem \ref{th:con-ent}.
\begin{cor}\label{cor:coh-ent}
Let $\chi_{AB}\in\mathcal{X}$ be a random mixed
state on $m \otimes m$ dimensional Hilbert space with $m\geqslant 3$
generated as mentioned above. Then, for all $\epsilon > 0$
\begin{eqnarray}\label{Eq:conc1}
\mathbf{Pr} \Set{\abs{E_r^{A|B} (\chi_{AB}) -\frac{m-1}{2n}}>
\epsilon} \leqslant 2 \exp\Pa{-\frac{m n \epsilon^2}{144 \pi^3 \ln 2
(\ln m)^2}}
\end{eqnarray}
and
\begin{eqnarray}
\mathbf{Pr} \Set{\abs{E_d^{A|B} (\chi_{AB}) -\frac{m-1}{2n}}>
\epsilon}\leqslant 2\exp\Pa{-\frac{m n \epsilon^2}{144 \pi^3 \ln 2
(\ln m)^2}}.
\end{eqnarray}
\end{cor}

Corollary \ref{cor:coh-ent} establishes that most of the random
states in the class $\mathcal{X}$ have almost the same fixed amount
of distillable entanglement and relative entropy of entanglement in
the large $m$ limit. Thus, our results help in estimating the
entanglement content of most of the random states in the class
$\mathcal{X}$ (which is an extremely hard task), asymptotically and
show the typicality of entanglement for class $\mathcal{X}$ of mixed
states.

\section{Conclusion}\label{sec:conclusion}

To conclude, we have provided analytical expressions for the average
subentropy and the average relative entropy of coherence over the
whole set of density matrices distributed according to the family of
probability measures obtained via the spectral decomposition.
We also have obtained the closed-form of the average subentropy
(Lemma~\ref{lem:average-subentropy}). Based on the compact form of
the average subentropy, we find that as we increase the dimension of
the quantum system, the average subentropy approaches towards the
maximum value of subentropy (attained for the maximally mixed state)
exponentially fast, which is surprising as the subentropy is a
nonlinear function of density matrix. We also have obtained the
compact form of the average coherence
$\overline{\sC}_{m,n}=\frac{m-1}{2n}$ from combining the average
entropy formula $\overline{S}_{m,n}=H_{mn}-H_n-\frac{m-1}{2n}$
\cite{Page1993,Foong1994,Jorge1995,Sen1996} and the average
subentropy formula $\overline{Q}_{m,n}=1+H_{mn}-H_m-H_n$.
Interestingly, using L\'evy's lemma, we prove that the coherence of
random mixed states sampled from induced measures via partial
tracing show the concentration phenomenon, establishing the generic
nature of coherence content of random mixed states. As a very
important application of our results, we show a huge reduction in
the computational complexity of entanglement measures such as
relative entropy of entanglement and distillable entanglement. We
find the entanglement properties of a specific class random
bipartite mixed states, thanks to Theorem \ref{th:con-ent}. Since
quantum coherence and entanglement are deemed as useful resources
for implementations of various quantum technologies, our results
will serve as a benchmark to gauge the resourcefulness of a generic
mixed state for a certain task at hand. Furthermore, our results may
have some applications in black hole physics as to how much
coherence can be there in the Hawking radiation for non-thermal
states \cite{PageB1993}, in thermalization of closed quantum systems
and in catalytic coherence transformations.

\noindent {\it Acknowledgments.--} L.Z. is grateful to Jie-feng Wang
for directing me to notice the combinatorial identity
\eqref{eq:wjf}, and would also like to thank the National Natural
Science Foundation of China (No.11301124 \& No.61673145) for
support. L.Z. is also supported by Natural Science Foundation of
Zhejiang Province of China (LY17A010027). U.S. acknowledges the
research fellowship of Department of Atomic Energy, Government of
India.



\section{Appendix}\label{sect:appendix}

Here we give a very brief review of the subentropy and the Selberg's
integrals. Also we provide the proofs of the Proposition
\ref{prop1}, Lemma \ref{lem:average-subentropy}, and the two
Theorems \ref{thm:gen-coh} and \ref{th:con-ent} of the main text.

\subsection{Quantum Subentropy}\label{subsect:subentropy-related}

The von Neumann entropy of a quantum system is of paramount
importance in physics starting from thermodynamics
\cite{JaynesA1957,JaynesB1957} to the quantum information theory,
e.g., in studies of the classical capacity of a quantum channel and
the compressibility of a quantum source \cite{Schumacher1995}, and
serves as the least upper bound on the accessible information. The
von Neumann entropy of an $m$ dimensional density matrix $\rho$, is
defined as $S(\rho) = - \sum^m_{j=1}\lambda_j\ln\lambda_j$, where
$\lambda = \set{\lambda_1,\cdots,\lambda_m}$ are eigenvalues of
$\rho$. An analogous lower bound on the accessible information,
obtained in Ref. \cite{JozsaB1994} and called as the
\emph{subentropy} $Q(\rho)$, is defined as
$Q(\rho) =
-\sum^m_{i=1}\lambda^m_i \Pa{\prod_{j\neq i}
(\lambda_i-\lambda_j)}^{-1}\ln\lambda_i$. Also, when two or more of
the eigenvalues $\lambda_j$ are equal, the value of $Q$ is
determined by taking a limit starting with unequal eigenvalues,
unambiguously. The upper bound $S(\rho)$ and the lower bound
$Q(\rho)$ on the accessible information are achieved for the
ensemble of eigenstates of $\rho$ and the Scrooge ensemble
\cite{JozsaB1994}, respectively.
Thus,
the von Neumann entropy and the subentropy together define the range of
the accessible information for a given density matrix. Main
properties of subentropy are summarized here as the following
proposition for the reader's convenience, details can be found in
\cite{JozsaB1994, Harremoes2001, Mintert2004, Datta2014}.
\begin{prop}
The subentropy $Q(\rho)$ of a quantum state $\rho$ satisfies the
following properties:
\begin{enumerate}[(1)]
\item $0\leqslant Q(\rho)\leqslant 1-\gamma_{\mathrm{Euler}}$, where
$\gamma_{\mathrm{Euler}}\approx0.57722$ is Euler's constant. In
particular, the lower bound is achieved only at all pure states; and
the maximum value of subentropy is achieved at the maximally mixed
state, that is, $Q(\I_m/m)=\ln m - H_m + 1$, where $H_m$ is the
$m$-th harmonic number. Moreover, $Q(\rho)\leqslant
-\ln\lambda_{\max}(\rho)$, where $\lambda_{\max}(\rho)$ is the
maximum eigenvalue of $\rho$.
\item $Q(\rho)$ is a concave function in $\rho$ (which is also true for the von Neumann entropy).
That is, if $\set{\rho_j}^n_{j=1}$ are density matrices and
$\set{q_j}^n_{j=1}$ is a probability distribution, then
\begin{eqnarray}
Q\Pa{\sum^n_{j=1}q_j\rho_j}\geqslant \sum^n_{j=1}q_jQ(\rho_j).
\end{eqnarray}
In particular, $Q\Pa{\sum_i p_i U_i\rho U^\dagger_i}\geqslant
Q(\rho)$, where $\set{p_i}_i$ is a probability distribution and
$\set{U_i}_i$ is a set of unitary matrices.
\item $Q(\rho)$ is Schur-concave in $\rho$. Note that
a real-valued function $f(\rho)$ is called \emph{Schur} concave in
$\rho$  if $f(\rho)\geqslant f(\sigma)$ whenever $\rho$ is majorized
by $\sigma$.
\item  $Q(\rho)$ is a continuous function in $\rho$. Assume $\rho,\rho'\in\density{\complex^m}$. If $\norm{\rho-\rho'}_1\leqslant
e^{-1}$, then
\begin{eqnarray}
\abs{Q(\rho)-Q(\rho')} \leqslant (\ln
m)\norm{\rho-\rho'}_1+\eta(\norm{\rho-\rho'}_1),
\end{eqnarray}
where $\eta(x)=-x\ln x$.
\item $Q(\rho_A\ot\rho_B)\leqslant Q(\rho_A)+Q(\rho_B)$. In general,
$Q(\rho_A\ot\rho_B)\neq Q(\rho_A)+Q(\rho_B)$.
\end{enumerate}
\end{prop}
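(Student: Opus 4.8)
The backbone of the whole proposition will be a single integral representation of the subentropy that is manifestly a smooth, unitarily invariant function of $\rho$. Starting from the elementary identity $-\ln\lambda=\int_0^\infty\big(\tfrac{1}{\lambda+t}-\tfrac{1}{1+t}\big)\dif t$ and writing $Q(\rho)=\sum_i w_i(-\ln\lambda_i)$ with weights $w_i=\lambda_i^m/\prod_{j\neq i}(\lambda_i-\lambda_j)$, I would first note that $\sum_i w_i=1$ (it is the $(m-1)$-st divided difference of $x\mapsto x^m$ on the nodes $\lambda_1,\dots,\lambda_m$, which equals $\sum_i\lambda_i=\Tr\rho=1$) and then evaluate $\sum_i w_i/(\lambda_i+t)$ by a residue/partial-fraction computation, obtaining $\sum_i w_i/(\lambda_i+t)=1-t^m/\prod_i(\lambda_i+t)$. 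This yields
\[
Q(\rho)=\int_0^\infty\Br{\frac{t}{1+t}-\frac{t^m}{\det(\rho+t\I_m)}}\dif t .
\]
Because $\det(\rho+t\I_m)=\prod_i(\lambda_i+t)$ is analytic in the entries of $\rho$ and strictly positive for $t>0$, this representation is well defined for every $\rho$ with no recourse to limits, which settles the claim that the value of $Q$ at coincident eigenvalues is unambiguous.

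From this representation the analytic properties follow one integrand at a time. For concavity (property 2) I would use that $\rho\mapsto\det(\rho+t\I_m)^{-1}=\exp(-\ln\det(\rho+t\I_m))$ is convex, since $-\ln\det$ is convex on positive matrices and $\exp$ is convex increasing; hence $-t^m\det(\rho+t\I_m)^{-1}$ is concave for each $t>0$ and concavity survives integration. The special case $Q(\sum_i p_iU_i\rho U_i^\dagger)\geq Q(\rho)$ then follows from concavity together with the unitary invariance $Q(U\rho U^\dagger)=Q(\rho)$ built into the spectral definition. For Schur-concavity (property 3) I would observe that $Q$ is symmetric in the $\lambda_i$ and, by the same log-convexity of $\prod_i(\lambda_i+t)$, concave in the eigenvalue vector; a symmetric concave function on the simplex is Schur-concave. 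The bounds of property 1 then drop out of majorization: $\I_m/m$ is majorized by $\rho$ and $\rho$ is majorized by any pure state, so $0=Q(\text{pure})\leq Q(\rho)\leq Q(\I_m/m)=\ln m-H_m+1$, and since $H_m-\ln m\downarrow\gamma_{\mathrm{Euler}}$ the upper value increases to $1-\gamma_{\mathrm{Euler}}$. The sharper statement $Q(\rho)\geq 0$ with equality exactly at pure states is read directly off the representation: $\prod_i(\lambda_i+t)$, being the exponential of a concave function of $\lambda$ on the simplex, is minimized at the vertices, each giving $t^{m-1}(1+t)$, so the integrand is pointwise nonnegative and vanishes for all $t$ only when $\lambda$ is a vertex. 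The estimate $Q(\rho)\leq-\ln\lambda_{\max}$ reduces, via the same representation, to nonnegativity of the explicit integrand $\tfrac{1}{\lambda_{\max}+t}-1+t^m/\prod_i(\lambda_i+t)$, which collapses to $(1-\lambda_{\max})^2/[(\lambda_{\max}+t)(1-\lambda_{\max}+t)]$ in the qubit case; I would complete the general case following \cite{Datta2014}.

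The remaining two properties are the ones I expect to cost real work. For the continuity modulus (property 4) I would run a Fannes-type argument: reduce to the sorted spectra using $\sum_i\abs{\lambda_i^\downarrow-\mu_i^\downarrow}\leq\norm{\rho-\rho'}_1$, and then bound the variation of the symmetric, concave $Q$ under an $\ell_1$ displacement of the eigenvalue vector of size at most $e^{-1}$, the $(\ln m)$-factor arising as the effective slope and the $\eta(\cdot)$ correction arising exactly as in Shannon/von Neumann entropy; the point requiring care is that $Q$ is not an additive spectral sum $\sum_i g(\lambda_i)$, so the local estimate must be extracted from the representation above rather than quoted verbatim from Fannes, as carried out in \cite{Datta2014}. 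The genuine obstacle is subadditivity (property 5), $Q(\rho_A\ot\rho_B)\leq Q(\rho_A)+Q(\rho_B)$: the eigenvalues of $\rho_A\ot\rho_B$ are the products $\lambda_i\mu_j$, and after the substitution $s=1/t$ the representation becomes $Q(\rho)=\int_0^\infty\big[\tfrac{1}{1+s}-\prod_i(1+\lambda_i s)^{-1}\big]s^{-2}\dif s$, so subadditivity is equivalent to the scalar inequality $\int_0^\infty\big[\prod_i(1+\lambda_is)^{-1}+\prod_j(1+\mu_js)^{-1}-\prod_{i,j}(1+\lambda_i\mu_js)^{-1}-\tfrac{1}{1+s}\big]s^{-2}\dif s\leq 0$. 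This does not factorize and has no fixed sign at the level of the integrand, so I would prove it following the dedicated argument of \cite{Datta2014} (alternatively via the operational reading of $Q$ as an accessible-information bound), flagging it as the step where the compilation genuinely needs its own estimate rather than the integral representation alone.
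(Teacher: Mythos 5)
You should know at the outset that the paper does not actually prove this proposition: it is presented explicitly ``for the reader's convenience,'' with every item delegated to the cited literature \cite{JozsaB1994, Harremoes2001, Mintert2004, Datta2014}. Your blind proposal therefore goes well beyond the paper's own treatment, and it heads in essentially the direction of \cite{Datta2014}, whose determinantal integral representation you rederive; it is also the real-line counterpart of the contour formula $Q(\rho)=-\frac1{2\pi\mathrm{i}}\oint(\ln z)\det\pa{(\I_m-\rho/z)^{-1}}\dif z$ that the paper states but never exploits. Your computations check out: $\sum_i w_i$ is the $(m-1)$-st divided difference of $x^m$, hence equals $h_1(\lambda)=\operatorname{Tr}\rho=1$; the partial-fraction identity $\sum_i w_i/(\lambda_i+t)=1-t^m/\prod_i(\lambda_i+t)$ is correct; and the resulting formula reproduces, for instance, $Q(\I_2/2)=\ln 2-\tfrac12$. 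The ensuing arguments for concavity (convexity of $\exp(-\ln\det)$ by the convex-increasing composition rule), Schur-concavity (symmetric plus concave in $\lambda$), the majorization sandwich $0\leqslant Q(\rho)\leqslant\ln m-H_m+1\leqslant 1-\gamma_{\mathrm{Euler}}$, and the equality characterization at pure states (the integrand vanishing for all $t$ forces equality of two polynomials in $t$, hence a vertex spectrum) are all sound. Three remarks. First, you assert $Q(\I_m/m)=\ln m-H_m+1$ rather than deriving it, though it follows from your own representation. Second, your deferred bound $Q(\rho)\leqslant-\ln\lambda_{\max}$ in fact closes pointwise in general, not just for qubits: writing $s=1-\lambda_{\max}$, the required inequality is $(t-s)\prod_{i\neq\max}(\lambda_i+t)\leqslant t^m$, which follows from two applications of AM--GM, namely $\prod_{i\neq\max}(\lambda_i+t)\leqslant(t+\tfrac{s}{m-1})^{m-1}$ and then AM--GM on the $m$ factors consisting of one copy of $(t-s)$ and $m-1$ copies of $(t+\tfrac{s}{m-1})$, whose arithmetic mean is exactly $t$ (for $t<s$ the left side is nonpositive, so nothing is needed). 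Third, you never address the final clause of item (5), non-additivity, but maximally mixed marginals settle it immediately: $Q(\I_{mn}/mn)-Q(\I_m/m)-Q(\I_n/n)=H_m+H_n-H_{mn}-1$, which is generically nonzero (it is the negative of the average-subentropy expression of Lemma~\ref{lem:average-subentropy}). Deferring the Fannes-type modulus in (4) and tensor subadditivity in (5) to the dedicated arguments of \cite{Datta2014} is legitimate and, given that the paper itself proves none of the five items, exactly mirrors the paper's own practice.
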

At present, we do not know wether the subadditivity inequality of
subentropy is true or not: $Q(\rho_{AB})\leqslant
Q(\rho_A)+Q(\rho_B)$. A similarly question can be asked for the
strong subadditivity of subentropy:
$Q(\rho_{ABC})+Q(\rho_B)\leqslant Q(\rho_{AB})+Q(\rho_{BC})$.
For a comparison between the von Neumann entropy and the subentropy, we rewrite both
$S$ and $Q$ as contour integrals. $S$ can be represented as
\begin{eqnarray}
S(\rho) = -\frac1{2\pi\mathrm{i}}\oint (\ln z)\Tr{(\I_m -
\rho/z)^{-1}}\dif z,
\end{eqnarray}
where the contour encloses all the nonzero eigenvalues of $\rho$.
$Q$ can be also represented as
\begin{eqnarray}
Q(\rho) = -\frac1{2\pi\mathrm{i}}\oint (\ln z)\det\Pa{(\I_m -
\rho/z)^{-1}}\dif z,
\end{eqnarray}
$S(\rho)$ and $Q(\rho)$ are strikingly similar in above forms and
where the trace appears in the formula for the von Neumann entropy,
the determinant appears in the formula for the subentropy. Other
comparison can also be seen in Refs. \cite{JozsaB1994,
Harremoes2001, Mintert2004, Datta2014}.
Now, we present Selberg's integrals and the calculation of the
average subentropy of random mixed states.

\subsection{Selberg's Integrals and its consequences}\label{subsect:sellberg-integral}

\begin{prop}[Selberg's Integrals,\cite{Andrews1999}]\label{prop:selberg}
If $m$ is a positive integer
and $\alpha,\beta,\gamma$ are complex numbers such that
$$
\re(\alpha)>0,~~~\re(\beta)>0,~~~\re(\gamma)>-\min\Set{\frac1m,
\frac{\re(\alpha)}{m-1}, \frac{\re(\beta)}{m-1}},
$$
then
\begin{eqnarray}
S_m(\alpha,\beta,\gamma) &=&
\int^1_0\cdots\int^1_0\Pa{\prod^m_{j=1}x^{\alpha-1}_j(1-x_j)^{\beta-1}}\abs{\Delta(x)}^{2\gamma}[\dif
x]\notag\\
&=&\prod^m_{j=1}\frac{\Gamma(\alpha+\gamma(j-1))\Gamma(\beta+\gamma(j-1))\Gamma(1+\gamma
j)}{\Gamma(\alpha+\beta+\gamma(m+j-2))\Gamma(1+\gamma)},\label{eq:selberg-int}
\end{eqnarray}
where $ \Delta(x) = \prod_{1\leqslant i<j\leqslant m}(x_i-x_j)$ and $[\dif x] = \prod^m_{j=1}\dif x_j$.
Furthermore, if $1\leqslant k\leqslant m$, then
\begin{eqnarray}
\int^1_0\cdots\int^1_0\Pa{\prod^k_{j=1}{x_j}}\Pa{\prod^m_{j=1}x^{\alpha-1}_j(1-x_j)^{\beta-1}}\abs{\Delta(x)}^{2\gamma}[\dif
x]
=S_m(\alpha,\beta,\gamma)\prod^k_{j=1}\frac{\alpha+\gamma(m-j)}{\alpha+\beta+\gamma(2m-j-1)}.\label{eq:aomoto-selberg-int}
\end{eqnarray}
\end{prop}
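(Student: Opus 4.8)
The plan is to establish the product-insertion formula \eqref{eq:aomoto-selberg-int} first, by Aomoto's integration-by-parts method, and then to read off the closed form \eqref{eq:selberg-int} of the base integral from the recursion that its $k=m$ instance produces. Write $\Phi(x)=\prod_{j=1}^m x_j^{\alpha-1}(1-x_j)^{\beta-1}\abs{\Delta(x)}^{2\gamma}$ for the Selberg density on $[0,1]^m$ and, for a polynomial $P$, abbreviate $\langle P\rangle=\int_{[0,1]^m}P(x)\,\Phi(x)\,[\dif x]$, so that $S_m(\alpha,\beta,\gamma)=\langle 1\rangle$ and the left-hand side of \eqref{eq:aomoto-selberg-int} is $\langle x_1\cdots x_k\rangle$.

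First I would record, away from the coincidence locus, the logarithmic derivative
\[
\frac{\partial_{x_k}\Phi}{\Phi}=\frac{\alpha-1}{x_k}-\frac{\beta-1}{1-x_k}+2\gamma\sum_{j\neq k}\frac1{x_k-x_j}.
\]
The hypotheses $\re(\alpha)>0$, $\re(\beta)>0$ and the stated lower bound on $\re(\gamma)$ are exactly what make the relevant boundary integrands $P\,\Phi$ vanish on the faces $x_k\in\{0,1\}$ and on the diagonals $x_i=x_j$, so every boundary term created by integration by parts vanishes; hence $\langle\partial_{x_k}P\rangle+\langle P\,\partial_{x_k}\Phi/\Phi\rangle=0$ for every polynomial $P$.

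The heart of the argument is then to feed the symmetric choice $P=x_1\cdots x_{k-1}\,x_k(1-x_k)$ into this identity, clear the denominators coming from $\partial_{x_k}\Phi/\Phi$, and average over all permutations of the variables using the full symmetry of $\Phi$. The genuinely delicate part is the Vandermonde contribution $2\gamma\sum_{j\neq k}x_k(1-x_k)/(x_k-x_j)$: after symmetrization the antisymmetric partial-fraction sums must be collapsed so that only $\langle x_1\cdots x_{k-1}\rangle$ and $\langle x_1\cdots x_k\rangle$ survive. I expect this partial-fraction and symmetrization bookkeeping to be the main obstacle; everything else is formal. The outcome is the first-order recursion $(\alpha+\gamma(m-k))\langle x_1\cdots x_{k-1}\rangle=(\alpha+\beta+\gamma(2m-k-1))\langle x_1\cdots x_k\rangle$, and telescoping it from $k$ down to $1$ yields \eqref{eq:aomoto-selberg-int}.

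Finally, to evaluate $S_m$ itself I would take $k=m$ and note $\langle x_1\cdots x_m\rangle=S_m(\alpha+1,\beta,\gamma)$, turning \eqref{eq:aomoto-selberg-int} into the one-step difference equation
\[
\frac{S_m(\alpha+1,\beta,\gamma)}{S_m(\alpha,\beta,\gamma)}=\prod_{j=1}^m\frac{\alpha+\gamma(m-j)}{\alpha+\beta+\gamma(2m-j-1)},
\]
with a companion equation in $\beta$ coming from the involution $x_j\mapsto 1-x_j$. The product of Gamma quotients on the right of \eqref{eq:selberg-int} solves both difference equations, so the ratio of the two sides of \eqref{eq:selberg-int} is $1$-periodic in $\alpha$ and in $\beta$; a holomorphy-and-growth argument in the indicated half-planes forces this ratio to be independent of $\alpha$ and $\beta$, and a single tractable evaluation --- either the degenerate case $\gamma=0$, where $\Phi$ factorizes and $S_m(\alpha,\beta,0)=\Pa{\Gamma(\alpha)\Gamma(\beta)/\Gamma(\alpha+\beta)}^m$ is a product of Euler Beta integrals matching the claimed formula, or the large-$\alpha$ asymptotics of both sides --- fixes the remaining constant to be $1$. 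This pins down \eqref{eq:selberg-int} and, with it, completes the proof. The two points needing real care are thus the symmetrization collapsing the partial fractions and the growth argument (together with the special evaluation) that removes the periodic ambiguity.
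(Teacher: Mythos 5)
The paper does not prove this proposition at all: it is quoted verbatim from the cited reference \cite{Andrews1999}, so there is no internal proof to compare against. Your outline is essentially Aomoto's proof as presented in that reference --- integration by parts against $\partial_{x_k}\bigl[x_1\cdots x_{k-1}x_k(1-x_k)\Phi\bigr]$ with vanishing boundary terms, symmetrization to collapse the partial fractions $\sum_{j\neq k}(x_k-x_j)^{-1}$ via $\frac{f(x_k)}{x_k-x_j}+\frac{f(x_j)}{x_j-x_k}=\frac{f(x_k)-f(x_j)}{x_k-x_j}$, the resulting one-step recursion giving \eqref{eq:aomoto-selberg-int}, and then a difference-equation-plus-normalization argument for $S_m$ itself. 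The structure is right, and the two places you flag as delicate are indeed where the work lies; but as written the proposal defers precisely those computations, so it is a plan rather than a proof.

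One step, however, is genuinely wrong rather than merely deferred. After you show that $S_m(\alpha,\beta,\gamma)$ and the Gamma-product $F(\alpha,\beta,\gamma)$ on the right of \eqref{eq:selberg-int} satisfy the same first-order difference equations in $\alpha$ and in $\beta$, the periodicity-and-growth argument can at best show that $S_m/F$ is independent of $\alpha$ and $\beta$; it remains an undetermined function $c(m,\gamma)$ of $\gamma$. The degenerate evaluation at $\gamma=0$ therefore only yields $c(m,0)=1$ and says nothing about $c(m,\gamma)$ for the values $\gamma>0$ actually used in this paper --- you have no difference equation in $\gamma$ with which to propagate that normalization. So of your two proposed ways to ``fix the remaining constant,'' only the second one (large-$\alpha$ asymptotics of both sides, via Laplace-type concentration of the integral near $x_j=1$ together with Stirling's formula on the Gamma quotients, showing $S_m/F\to1$ and hence, by $1$-periodicity plus continuity, $S_m/F\equiv1$) actually closes the argument, and it must be carried out for each fixed $\gamma$. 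With that route taken, and with the symmetrization identity and the boundary vanishing justified from $\re(\alpha)>0$, $\re(\beta)>0$ and the stated lower bound on $\re(\gamma)$, your outline does lead to a complete proof.
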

The following two integrals (Propositions~\ref{prop:A-1} and
\ref{prop:A-2}) are direct consequences of
Proposition~\ref{prop:selberg}.

\begin{prop}[\cite{Andrews1999}]\label{prop:A-1}
With the same conditions on the parameters $\alpha,\gamma$,
\begin{eqnarray}\label{eq:deformed-selberg}
\int^\infty_0\cdots\int^\infty_0
\abs{\Delta(x)}^{2\gamma}\prod^m_{j=1}x^{\alpha-1}_j e^{-x_j}\dif
x_j = \prod^m_{j=1}\frac{\Gamma(\alpha+\gamma(j-1))\Gamma(1+\gamma
j)}{\Gamma(1+\gamma)}.
\end{eqnarray}
\end{prop}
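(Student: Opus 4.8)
The plan is to derive \eqref{eq:deformed-selberg} as a confluent (large-$\beta$) limit of the Selberg integral \eqref{eq:selberg-int}. First I would rescale in $S_m(\alpha,\beta,\gamma)$ by setting $x_j=y_j/\beta$, so that each $y_j$ runs over $[0,\beta]$. Since $\Delta$ is homogeneous of degree $\binom{m}{2}=m(m-1)/2$, the three factors transform as $\prod_{j=1}^m x_j^{\alpha-1}=\beta^{-m(\alpha-1)}\prod_{j=1}^m y_j^{\alpha-1}$, $\abs{\Delta(x)}^{2\gamma}=\beta^{-\gamma m(m-1)}\abs{\Delta(y)}^{2\gamma}$, and $\prod_{j=1}^m\dif x_j=\beta^{-m}\prod_{j=1}^m\dif y_j$. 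Collecting the powers of $\beta$ yields $S_m(\alpha,\beta,\gamma)=\beta^{-m\alpha-\gamma m(m-1)}\int_0^\beta\!\cdots\!\int_0^\beta\abs{\Delta(y)}^{2\gamma}\prod_{j=1}^m y_j^{\alpha-1}\Pa{1-y_j/\beta}^{\beta-1}\dif y_j$ (the conditions of Proposition~\ref{prop:selberg} hold for all large $\beta$, since the constraint $\re(\gamma)>-\re(\beta)/(m-1)$ becomes vacuous as $\beta\to\infty$). Consequently the target integral $I$ on the left of \eqref{eq:deformed-selberg} should equal $\lim_{\beta\to\infty}\beta^{m\alpha+\gamma m(m-1)}S_m(\alpha,\beta,\gamma)$.

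For the limit of the rescaled integral I would use the pointwise convergence $\Pa{1-y_j/\beta}^{\beta-1}\to e^{-y_j}$ together with the fact that $[0,\beta]^m$ exhausts $[0,\infty)^m$. To make the interchange of limit and integral rigorous --- the one genuinely analytic step here --- I would dominate the integrand by a $\beta$-independent integrable function: from the elementary bound $1-t\le e^{-t}$ one gets $\Pa{1-y_j/\beta}^{\beta-1}\le e^{-y_j(\beta-1)/\beta}\le e^{-y_j/2}$ for $\beta\ge2$, so the rescaled integrand (extended by zero outside $[0,\beta]^m$) is bounded above by $\abs{\Delta(y)}^{2\gamma}\prod_{j=1}^m y_j^{\alpha-1}e^{-y_j/2}$. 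The latter is integrable on $[0,\infty)^m$ because $\re(\alpha)>0$ controls the behaviour at the origin while the exponential factor dominates the polynomial growth of $\abs{\Delta(y)}^{2\gamma}$ at infinity; dominated convergence then gives the claimed limit and, incidentally, the absolute convergence of $I$ itself.

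It remains to evaluate $\lim_{\beta\to\infty}\beta^{m\alpha+\gamma m(m-1)}S_m(\alpha,\beta,\gamma)$ from the closed form in \eqref{eq:selberg-int}. Only the factors $\Gamma(\beta+\gamma(j-1))/\Gamma(\alpha+\beta+\gamma(m+j-2))$ depend on $\beta$, and the standard Stirling asymptotic $\Gamma(\beta+a)/\Gamma(\beta+b)\sim\beta^{a-b}$ gives each of them the behaviour $\beta^{-\alpha-\gamma(m-1)}$; multiplying over $j=1,\ldots,m$ produces exactly $\beta^{-m\alpha-\gamma m(m-1)}$, which cancels the prefactor $\beta^{m\alpha+\gamma m(m-1)}$. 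The surviving, $\beta$-independent factors are precisely $\prod_{j=1}^m\Gamma(\alpha+\gamma(j-1))\Gamma(1+\gamma j)/\Gamma(1+\gamma)$, which is the right-hand side of \eqref{eq:deformed-selberg}. The main obstacle, as noted, is not the algebra of the Gamma asymptotics but the dominated-convergence justification; everything else is bookkeeping of powers of $\beta$.
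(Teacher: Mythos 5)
Your proposal is correct, and it is essentially the argument the paper has in mind: the paper simply states Proposition~\ref{prop:A-1} as a ``direct consequence'' of Selberg's integral (citing \cite{Andrews1999}), and the standard way to realize that is exactly your confluent limit $x_j=y_j/\beta$, $\beta\to\infty$, with the power of $\beta$ cancelling against the Stirling asymptotics of the $\beta$-dependent Gamma factors. Your dominated-convergence bound $(1-y_j/\beta)^{\beta-1}\leqslant e^{-y_j/2}$ correctly supplies the one analytic detail the paper leaves implicit.
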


\begin{prop}[\cite{Andrews1999}]\label{prop:A-2}
With the same conditions on the parameters $\alpha,\gamma$, and
$1\leqslant k\leqslant m$,
\begin{eqnarray}
\int^\infty_0\cdots\int^\infty_0
\Pa{\prod^k_{j=1}x_j}\abs{\Delta(x)}^{2\gamma}\prod^m_{j=1}x^{\alpha-1}_j
e^{-x_j}\dif x_j =\Pa{\prod^k_{j=1}(\alpha+\gamma(m-j))}\Pa{
\prod^m_{j=1}\frac{\Gamma(\alpha+\gamma(j-1))\Gamma(1+\gamma
j)}{\Gamma(1+\gamma)}}.\label{eq:aomoto-deformed-selberg}
\end{eqnarray}
\end{prop}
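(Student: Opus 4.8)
The plan is to obtain Proposition~\ref{prop:A-2} as a confluent (large-$\beta$) limit of the Aomoto--Selberg evaluation \eqref{eq:aomoto-selberg-int}, in exactly the way Proposition~\ref{prop:A-1} arises from the plain Selberg integral \eqref{eq:selberg-int}. First I would rescale each variable in \eqref{eq:aomoto-selberg-int} by $x_j = t_j/\beta$, so that the domain becomes $[0,\beta]^m$ and the weight $(1-x_j)^{\beta-1}$ turns into $(1-t_j/\beta)^{\beta-1}$, which converges pointwise to $e^{-t_j}$ as $\beta\to\infty$. Tracking the Jacobian and the homogeneity of the remaining factors produces an overall power of $\beta$: each $x_j^{\alpha-1}\dif x_j$ contributes $\beta^{-\alpha}$, the Vandermonde factor $\abs{\Delta(x)}^{2\gamma}$ contributes $\beta^{-\gamma m(m-1)}$ (since $2\binom{m}{2}=m(m-1)$), and the inserted monomial $\prod^k_{j=1}x_j$ contributes the extra $\beta^{-k}$. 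Thus after multiplying by $\beta^{\alpha m + \gamma m(m-1) + k}$ the left-hand side becomes an integral over $[0,\beta]^m$ whose integrand converges pointwise to $\Pa{\prod^k_{j=1}t_j}\abs{\Delta(t)}^{2\gamma}\prod^m_{j=1}t_j^{\alpha-1}e^{-t_j}$, i.e., to the integrand of \eqref{eq:aomoto-deformed-selberg}.

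For the right-hand side I would invoke the standard asymptotics of the Gamma function: using $\Gamma(\beta+a)/\Gamma(\beta+b)\sim\beta^{a-b}$ as $\beta\to\infty$, the $\beta$-dependent block of $S_m(\alpha,\beta,\gamma)$ behaves like $\beta^{-\alpha m - \gamma m(m-1)}\prod^m_{j=1}\frac{\Gamma(\alpha+\gamma(j-1))\Gamma(1+\gamma j)}{\Gamma(1+\gamma)}$, which is precisely the computation underlying Proposition~\ref{prop:A-1}. The Aomoto prefactor meanwhile satisfies $\prod^k_{j=1}\frac{\alpha+\gamma(m-j)}{\alpha+\beta+\gamma(2m-j-1)}\sim\beta^{-k}\prod^k_{j=1}(\alpha+\gamma(m-j))$. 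Multiplying these two asymptotics and compensating by the same power $\beta^{\alpha m + \gamma m(m-1) + k}$ leaves exactly the claimed right-hand side of \eqref{eq:aomoto-deformed-selberg}, with all $\beta$-dependence cancelling in the limit.

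The only non-formal step, and the one I expect to be the main obstacle, is justifying the interchange of the limit $\beta\to\infty$ with the integral on the left. For this I would appeal to dominated convergence on $[0,\infty)^m$, extending each rescaled integrand by zero outside $[0,\beta]^m$. Using the elementary bound $(1-t/\beta)^{\beta-1}=\exp\Pa{(\beta-1)\ln(1-t/\beta)}\leqslant e^{-t/2}$, valid for $\beta\geqslant 2$ and $0\leqslant t\leqslant\beta$ (since $\ln(1-u)\leqslant -u$ gives $(\beta-1)\ln(1-t/\beta)\leqslant -t(1-1/\beta)\leqslant -t/2$), the integrands are dominated uniformly in $\beta$ by $\Pa{\prod^k_{j=1}t_j}\abs{\Delta(t)}^{2\gamma}\prod^m_{j=1}t_j^{\alpha-1}e^{-t_j/2}$. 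This dominating function is integrable over $[0,\infty)^m$, a fact that itself reduces to a finite integral of the type in Proposition~\ref{prop:A-1} after the trivial rescaling $t_j\mapsto 2t_j$, so the convergence theorem applies and the limit of the integrals equals the integral of the pointwise limit. Matching the two sides then yields \eqref{eq:aomoto-deformed-selberg}.
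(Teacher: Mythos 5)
Your proposal is correct, and it supplies in detail exactly the derivation the paper only gestures at: the paper gives no proof of Proposition~\ref{prop:A-2}, merely citing \cite{Andrews1999} and remarking that it is a ``direct consequence'' of Proposition~\ref{prop:selberg}, and the standard route in that reference is precisely your confluent limit $x_j=t_j/\beta$, $\beta\to\infty$, of Aomoto's extension \eqref{eq:aomoto-selberg-int}, with the powers of $\beta$ and the Gamma-function asymptotics matching as you computed. The only point worth tightening is the integrability of your dominating function, which should be justified directly (e.g.\ via $\abs{t_i-t_j}\leqslant(1+t_i)(1+t_j)$ reducing everything to one-dimensional Gamma integrals) rather than by appeal to an integral of the very type being evaluated, so as to avoid even the appearance of circularity.
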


In the following, we prove Propositions~\ref{prop:zycz} and
\ref{prop:product-delta} from Propositions~\ref{prop:A-1} and
\ref{prop:A-2}, respectively, using the Laplace transform.
\begin{prop}[\cite{Zyczkowski2001, Zyczkowski2003}]\label{prop:zycz}
It holds that
\begin{eqnarray}
\frac1{C^{(\alpha,\gamma)}_m}&:=&\int^\infty_0\cdots\int^\infty_0
\delta\Pa{1-\sum^m_{j=1}x_j}\abs{\Delta(x)}^{2\gamma}\prod^m_{j=1}x^{\alpha-1}_j\dif
x_j \notag\\
&=& \frac1{\Gamma(\alpha m+\gamma
m(m-1))}\prod^m_{j=1}\frac{\Gamma(\alpha+\gamma(j-1))\Gamma(1+\gamma
j)}{\Gamma(1+\gamma)}.\label{eq:int-1}
\end{eqnarray}
\end{prop}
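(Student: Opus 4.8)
The plan is to recover the simplex (delta-constrained) integral from the exponentially weighted integral of Proposition~\ref{prop:A-1} by a Laplace-transform argument that exploits the homogeneity of the integrand. First I would introduce, for $t>0$, the auxiliary function
\begin{eqnarray*}
f(t):=\int^\infty_0\cdots\int^\infty_0\delta\Pa{t-\sum^m_{j=1}x_j}\abs{\Delta(x)}^{2\gamma}\prod^m_{j=1}x^{\alpha-1}_j\dif x_j,
\end{eqnarray*}
so that the quantity we want is precisely $f(1)=1/C^{(\alpha,\gamma)}_m$.

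The key structural observation is that the integrand is homogeneous, so $f$ is a pure power of $t$. Substituting $x_j=ty_j$ and tracking the scaling of each factor — the delta function contributes $t^{-1}$ via $\delta\Pa{t(1-\sum_j y_j)}=t^{-1}\delta(1-\sum_j y_j)$; the Vandermonde factor $\abs{\Delta(x)}^{2\gamma}$ contributes $t^{\gamma m(m-1)}$, since $\Delta$ is homogeneous of degree $\binom{m}{2}$ and $2\gamma\binom{m}{2}=\gamma m(m-1)$; the power product $\prod_j x_j^{\alpha-1}$ contributes $t^{m(\alpha-1)}$; and the measure $\prod_j\dif x_j$ contributes $t^m$ — I would conclude
\begin{eqnarray*}
f(t)=t^{\,\alpha m+\gamma m(m-1)-1}\,f(1).
\end{eqnarray*}

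Then I would evaluate the Laplace transform $\int_0^\infty e^{-t}f(t)\dif t$ in two ways. On one hand, inserting the power law above and using $\int_0^\infty e^{-t}t^{s-1}\dif t=\Gamma(s)$ with $s=\alpha m+\gamma m(m-1)$ gives $\Gamma\Pa{\alpha m+\gamma m(m-1)}f(1)$. On the other hand, interchanging the order of integration (legitimate by Tonelli, as the integrand is nonnegative on the positive orthant) and performing the $t$-integral against the delta function replaces $\sum_j x_j$ by $t$ in the exponential, producing exactly the integral of Proposition~\ref{prop:A-1}, namely $\int^\infty_0\cdots\int^\infty_0\abs{\Delta(x)}^{2\gamma}\prod_j x_j^{\alpha-1}e^{-x_j}\dif x_j$, whose value is supplied by \eqref{eq:deformed-selberg}. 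Equating the two evaluations and dividing by $\Gamma\Pa{\alpha m+\gamma m(m-1)}$ yields the asserted formula.

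The only delicate points are bookkeeping: getting the homogeneity exponent right (in particular the $t^{-1}$ from the delta rescaling and the factor $\gamma m(m-1)$ from the Vandermonde), and justifying the swap of integrations, which is immediate here because every factor is nonnegative. I expect no genuine obstacle beyond this careful accounting, since the substantive analytic content is already packaged in Proposition~\ref{prop:A-1}.
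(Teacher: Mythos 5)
Your proposal is correct and is essentially the same argument as the paper's: both reduce the delta-constrained integral to the exponentially weighted Selberg integral of Proposition~\ref{prop:A-1} via a Laplace transform in the variable $t$ appearing in $\delta\Pa{t-\sum_j x_j}$. The only cosmetic difference is that you fix the power-law form $f(t)=t^{\alpha m+\gamma m(m-1)-1}f(1)$ up front by homogeneity and then evaluate the transform only at $s=1$, whereas the paper computes $\widetilde F(s)$ for general $s$ and applies the inverse Laplace transform; the bookkeeping of the exponent and the use of Tonelli are both handled correctly.
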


\begin{proof}
Let
\begin{eqnarray*}
F(t):=\int^\infty_0\cdots\int^\infty_0 \delta\Pa{t-\sum^m_{j=1}x_j}
\abs{\Delta(x)}^{2\gamma}\prod^m_{j=1}x^{\alpha-1}_j\dif x_j.
\end{eqnarray*}
Applying the Laplace transform ($t\to s$) to $F(t)$ gives us
\begin{eqnarray*}
\widetilde F(s) &=& \int^\infty_0 F(t)e^{-st}\dif t\\
&=&\int^\infty_0\cdots\int^\infty_0 \exp\Pa{-s\sum^m_{j=1}x_j}
\abs{\Delta(x)}^{2\gamma}\prod^m_{j=1}x^{\alpha-1}_j\dif x_j\\
&=&s^{-\alpha
m-2\gamma\binom{m}{2}}\int^\infty_0\cdots\int^\infty_0\abs{\Delta(y)}^{2\gamma}\prod^m_{j=1}y^{\alpha-1}_je^{-y_j}\dif
y_j,
\end{eqnarray*}
leading to the following via the inverse Laplace transform ($s\to t$) to
$\widetilde F(s)$:
\begin{eqnarray*}
F(t) = \frac{t^{\alpha m+\gamma m(m-1)-1}}{\Gamma\Pa{\alpha m+\gamma
m(m-1)}}
\int^\infty_0\cdots\int^\infty_0\abs{\Delta(x)}^{2\gamma}\prod^m_{j=1}x^{\alpha-1}_je^{-x_j}\dif
x_j,
\end{eqnarray*}
Therefore, we have
\begin{eqnarray*}
\frac1{C^{(\alpha,\gamma)}_m}=F(1) =\frac1{\Gamma\Pa{\alpha m+\gamma
m(m-1)}}
\times\int^\infty_0\cdots\int^\infty_0\abs{\Delta(x)}^{2\gamma}\prod^m_{j=1}x^{\alpha-1}_je^{-x_j}\dif
x_j.
\end{eqnarray*}
Hence the desired identity via Eq.~\eqref{eq:deformed-selberg}.
\end{proof}

\begin{prop}\label{prop:product-delta}
It holds that, for $1\leqslant k\leqslant m$,
\begin{eqnarray}
&&\int^\infty_0\cdots\int^\infty_0
\Pa{\prod^k_{j=1}x_j}\delta\Pa{1-\sum^m_{j=1}x_j}\abs{\Delta(x)}^{2\gamma}\prod^m_{j=1}x^{\alpha-1}_j\dif x_j\nonumber\\
&&=\frac1{\Gamma(\alpha m+\gamma
m(m-1)+k)}\int^\infty_0\cdots\int^\infty_0
\Pa{\prod^k_{j=1}x_j}\abs{\Delta(x)}^{2\gamma}\prod^m_{j=1}x^{\alpha-1}_je^{-x_j}\dif
x_j.\label{eq:int-2}
\end{eqnarray}
\end{prop}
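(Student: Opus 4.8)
The plan is to reuse, essentially verbatim, the Laplace-transform argument from the proof of Proposition~\ref{prop:zycz}, now dragging the extra monomial factor $\prod_{j=1}^k x_j$ through every step. First I would define the one-parameter family
$$
G(t) := \int^\infty_0\cdots\int^\infty_0 \Pa{\prod^k_{j=1}x_j}\delta\Pa{t-\sum^m_{j=1}x_j}\abs{\Delta(x)}^{2\gamma}\prod^m_{j=1}x^{\alpha-1}_j\dif x_j,
$$
so that the left-hand side of the claimed identity is exactly $G(1)$. The strategy is to compute the Laplace transform $\widetilde G(s)$ in closed form, recognize the $s$-independent part as precisely the integral appearing on the right-hand side, and then invert.

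Applying the Laplace transform $t\to s$ collapses the delta constraint and turns $G$ into an unconstrained integral over $(0,\infty)^m$ with the weight $\exp\Pa{-s\sum_j x_j}$, after a Fubini interchange of the transform with the multiple integral. I would then rescale $x_j = y_j/s$ and track the homogeneity degrees: the factor $\prod_{j=1}^k x_j$ contributes $s^{-k}$, the Vandermonde $\abs{\Delta(x)}^{2\gamma}$ contributes $s^{-\gamma m(m-1)}$ (since $\Delta$ is homogeneous of degree $\binom{m}{2}$), and $\prod_j x_j^{\alpha-1}\dif x_j$ contributes $s^{-\alpha m}$, while the exponential becomes $e^{-\sum_j y_j}$. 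Collecting powers gives
$$
\widetilde G(s) = s^{-(\alpha m + \gamma m(m-1)+k)}\int^\infty_0\cdots\int^\infty_0 \Pa{\prod^k_{j=1}y_j}\abs{\Delta(y)}^{2\gamma}\prod^m_{j=1}y^{\alpha-1}_j e^{-y_j}\dif y_j,
$$
where the remaining integral is exactly the right-hand integrand of the claim (with $y$ a dummy for $x$) and carries no dependence on $s$.

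Finally I would invert using $\mathcal{L}^{-1}\!\left[s^{-\beta}\right](t) = t^{\beta-1}/\Gamma(\beta)$ with $\beta = \alpha m + \gamma m(m-1) + k$, which is legitimate because $\beta>0$ under the standing hypotheses $\alpha,\gamma>0$. This yields $G(t) = \frac{t^{\beta-1}}{\Gamma(\beta)}\,J$ with $J$ the integral above, and evaluating at $t=1$ gives $G(1) = J/\Gamma(\beta)$, which is precisely Eq.~\eqref{eq:int-2}. The only genuinely delicate point is the bookkeeping of the scaling exponent of $s$ (the $-k$ from the new monomial is what upgrades $\Gamma(\alpha m + \gamma m(m-1))$ to $\Gamma(\alpha m + \gamma m(m-1)+k)$); the Fubini interchange and the positivity $\beta>0$ needed for the inverse transform are routine and mirror the earlier proposition exactly.
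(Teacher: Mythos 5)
Your proposal is correct and follows essentially the same route as the paper: the paper's own proof defines the same one-parameter family $f(t)$, computes its Laplace transform by collapsing the delta constraint and rescaling to extract the factor $s^{-(\alpha m+\gamma m(m-1)+k)}$, and inverts via $\mathcal{L}^{-1}[s^{-\beta}](t)=t^{\beta-1}/\Gamma(\beta)$ before setting $t=1$. Your bookkeeping of the extra $s^{-k}$ from the monomial $\prod_{j=1}^k x_j$ matches the paper exactly.
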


\begin{proof}
Similarly, let
\begin{eqnarray*}
f(t):=\int^\infty_0\cdots\int^\infty_0
\Pa{\prod^k_{j=1}x_j}\delta\Pa{t-\sum^m_{j=1}x_j}\abs{\Delta(x)}^{2\gamma}\prod^m_{j=1}x^{\alpha-1}_j\dif
x_j.
\end{eqnarray*}
Then, the Laplace transform of $f(t)$ is given by
\begin{eqnarray*}
\widetilde f(s)&&=\int^\infty_0\cdots\int^\infty_0
\Pa{\prod^k_{j=1}x_j}\exp\Pa{-\sum^m_{j=1}sx_j}\abs{\Delta(x)}^{2\gamma}\prod^m_{j=1}x^{\alpha-1}_j\dif
x_j\\
&&=s^{-(\alpha m+\gamma
m(m-1)+k)}\int^\infty_0\cdots\int^\infty_0
\Pa{\prod^k_{j=1}y_j}\abs{\Delta(y)}^{2\gamma}\prod^m_{j=1}y^{\alpha-1}_je^{-y_j}\dif
y_j.
\end{eqnarray*}
Therefore, we have
\begin{eqnarray*}
f(t):=\frac{t^{\alpha m+\gamma m(m-1)+k-1}}{\Gamma(\alpha m+\gamma
m(m-1)+k)}\int^\infty_0\cdots\int^\infty_0
\Pa{\prod^k_{j=1}y_j}\abs{\Delta(y)}^{2\gamma}\prod^m_{j=1}y^{\alpha-1}_je^{-y_j}\dif
y_j.
\end{eqnarray*}
By setting $t=1$ in the above equation, we derived the desired
identity via Eq.~\eqref{eq:aomoto-deformed-selberg}.
\end{proof}
\begin{prop}
It holds that
\begin{eqnarray}
\frac{\dif}{\dif t}\Pa{\frac{\Gamma(t+a)}{\Gamma(t+b)}} =
\Pa{\psi(t+a)-\psi(t+b)}\frac{\Gamma(t+a)}{\Gamma(t+b)},
\end{eqnarray}
where $\psi(t)=\frac{\dif}{\dif t}\ln \Gamma(t)$.
\end{prop}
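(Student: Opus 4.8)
The plan is to prove this by logarithmic differentiation, which turns the quotient of Gamma functions into a difference of their logarithmic derivatives. First I would fix an interval of $t$ on which neither $\Gamma(t+a)$ nor $\Gamma(t+b)$ vanishes or blows up---that is, away from the nonpositive-integer poles of $\Gamma$---so that the ratio $f(t) := \Gamma(t+a)/\Gamma(t+b)$ is smooth and nonzero there, and $\ln f(t)$ is well defined and differentiable.

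Next I would write $\ln f(t) = \ln\Gamma(t+a) - \ln\Gamma(t+b)$ and differentiate both sides in $t$. By the chain rule together with the very definition $\psi(z) = \frac{\dif}{\dif z}\ln\Gamma(z)$, the right-hand side becomes $\psi(t+a) - \psi(t+b)$, while the left-hand side is $f'(t)/f(t)$. Multiplying through by $f(t) = \Gamma(t+a)/\Gamma(t+b)$ then yields exactly the claimed formula.

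There is essentially no obstacle here: this is a one-line application of the chain rule to the definition of the digamma function, and the result is the standard quotient analogue of $\frac{\dif}{\dif t}\Gamma(t) = \psi(t)\Gamma(t)$. The only point worth a remark is that the derivation as written is valid wherever both $\Gamma(t+a)$ and $\Gamma(t+b)$ are finite and nonzero; since both sides of the identity are meromorphic in $t$, it then extends by analytic continuation to all $t$ avoiding the poles of $\Gamma(t+a)$ and $\Gamma(t+b)$, which covers the range needed in the subsequent applications.
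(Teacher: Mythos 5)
Your proof is correct: logarithmic differentiation of $\Gamma(t+a)/\Gamma(t+b)$ immediately gives the stated identity, and your remark about avoiding the poles of the Gamma functions is the right caveat. The paper itself states this proposition without proof (treating it as a standard fact), so there is nothing to contrast with; your argument is the canonical one-line derivation.
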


\subsection{The proof of Proposition~\ref{prop1} of the main text}\label{subsect:proposition-proof}

A family of probability measures over $\real^m_+$ can be defined as:
\begin{align}
\dif\nu_{\alpha,\gamma} (\Lambda):=C^{(\alpha,\gamma)}_m K_\gamma(\Lambda)\prod^m_{j=1}\lambda^{\alpha-1}_j \dif\lambda_j,
\end{align}
where $K_1(\Lambda)$ is given by
\begin{align}
 K_1(\Lambda)=\delta\Pa{1-\sum^m_{j=1}\lambda_j}\abs{\Delta(\lambda)}^{2},
\end{align}
with
$\Delta(\lambda)=\prod_{1\leqslant i<j\leqslant
m}(\lambda_i-\lambda_j)$ and $C^{(\alpha,\gamma)}_m=1/{\cI_m(\alpha,\gamma)}$ with

\begin{align}
\cI_m(\alpha,\gamma)=\frac{1}{\Gamma\Pa{\alpha m+\gamma m(m-1)}}\prod^m_{j=1}\frac{\Gamma\Pa{\alpha+\gamma(j-1)}\Gamma\Pa{1+\gamma
j}}{\Gamma\Pa{1+\gamma}}.
\end{align}
The subentropy of a state $\rho$ with the spectrum $\Lambda =\set{\lambda_1,\cdots, \lambda_m}$ can be written as \cite{JozsaB1994, Harremoes2001, Mintert2004, Datta2014}
\begin{align}
\label{eq:sub}
Q(\Lambda)
&=(-1)^{\frac{m(m-1)}2-1}\frac{\sum^m_{i=1}\lambda^m_i\ln\lambda_i\prod_{j\in\widehat
i}\phi'(\lambda_j)}{\abs{\Delta(\lambda)}^2},
\end{align}
where $\widehat i=\set{1,\ldots,m}\backslash\set{i}$, $\phi'(\lambda_j)=\prod_{k\in\widehat
j}(\lambda_j-\lambda_k)$ and
$\abs{\Delta(\lambda)}^2= \abs{\prod_{1\leqslant i<j\leqslant m}(\lambda_i-\lambda_j)}^2$.
The average subentropy over the set of mixed state is given by
\begin{align}
&
\cI^Q_m(\alpha,\gamma)=\int\dif\mu_{\alpha,\gamma}(\rho)Q(\rho)=\int\dif\nu_{\alpha,\gamma}(\Lambda)Q(\Lambda).
\end{align}
Denote $\phi(x):=\prod^m_{j=1}(x-x_j)$. Then $\phi'(x) =
\sum^m_{i=1}\prod_{j\in\widehat i}(x-x_j)$. Thus
$\phi'(x_i)=\prod_{j\in\widehat i}(x_i-x_j)$. Furthermore, we have
\begin{eqnarray}
\prod^m_{i=1}\phi'(x_i) = \prod^m_{i=1}\prod_{j\in\widehat
i}(x_i-x_j) = (-1)^{\frac{m(m-1)}2}\abs{\Delta(x)}^2.
\end{eqnarray}
Here $\abs{\Delta(x)}^2=\abs{\Delta(x_1,\ldots,x_m)}^2$ is called
the \emph{discriminant} of $\phi$ \cite{Andrews1999}. We also have
\begin{eqnarray}
\phi'(\lambda_2)\cdots \phi'(\lambda_m) =
(-1)^{\frac{m(m-1)}2}\phi'(\lambda_1)\abs{\Delta(\lambda_2,\ldots,\lambda_m)}^2.
\end{eqnarray}
If we expand the polynomial $\phi(x)$, then we have:
\begin{eqnarray}
\phi(x) =
x^m-\Pa{\sum^m_{j=1}x_j}x^{m-1}+\cdots+(-1)^m\prod^m_{j=1}x_j=\sum^m_{j=0}
(-1)^j e_jx^{m-j},
\end{eqnarray}
where $e_j(j=1,\ldots,m)$ is the $j$-th elementary symmetric
polynomial in $x_1,\ldots,x_m$, with $e_0\equiv1$.

In what follows, we calculate the integral $\cI^Q_m(\alpha,\gamma)$
for $\gamma=1$. Propositions~\ref{prop:zycz} and
\ref{prop:product-delta} will be used frequently for
$\gamma=1$.
\begin{eqnarray*}
\cI^Q_m(\alpha,1)&&=-mC^{(\alpha,1)}_m\sum^{m-1}_{k=0}(-1)^k\int^1_0\dif\lambda_1\lambda^{2(m-1)+\alpha-k}_1\ln\lambda_1\nonumber\\
&&~~~~~\times\int^\infty_0\cdots \int^\infty_0e_k
\delta\Pa{(1-\lambda_1)-\sum^m_{j=2}\lambda_j}\abs{\Delta(\lambda_2,\ldots,\lambda_m)}^2\prod^m_{j=2}\lambda^{\alpha-1}_j\dif\lambda_j.
\end{eqnarray*}
It suffices to calculate a family of integrals in terms
of the following form: for $k=0,1,\ldots,m-1$,
\begin{eqnarray*}
\int^\infty_0\cdots\int^\infty_0 e_k
\delta\Pa{(1-\lambda_1)-\sum^m_{j=2}\lambda_j}\abs{\Delta(\lambda_2,\ldots,\lambda_m)}^2\prod^m_{j=2}\lambda^{\alpha-1}_j\dif\lambda_j.
\end{eqnarray*}
If $k=0$, then
\begin{eqnarray}
&&\int^\infty_0\cdots\int^\infty_0e_0
\delta\Pa{(1-\lambda_1)-\sum^m_{j=2}\lambda_j}\abs{\Delta(\lambda_2,\ldots,\lambda_m)}^2\prod^m_{j=2}\lambda^{\alpha-1}_j\dif\lambda_j\notag\\
&&=(1-\lambda_1)^{(m-1)(m+\alpha-2)-1}\int^\infty_0
\delta\Pa{1-\sum^{m-1}_{j=1}x_j}
\abs{\Delta(x_1,\ldots,x_{m-1})}^2\prod^{m-1}_{j=1}x^{\alpha-1}_j\dif
x_j\notag\\
&&=(1-\lambda_1)^{(m-1)(m+\alpha-2)-1}
\frac{\prod^{m-1}_{j=1}\Gamma(\alpha+j-1)\Gamma(1+j)}{\Gamma((m-1)(m+\alpha-2))}.
\end{eqnarray}
Here we used Proposition~\ref{prop:zycz} in the last equality.

If $1\leqslant k\leqslant m-1$, it suffices to calculate the
following:
\begin{eqnarray}
&&\int^\infty_0\cdots\int^\infty_0 \Pa{\prod^k_{j=1}\lambda_{j+1}}
\delta\Pa{(1-\lambda_1)-\sum^m_{j=2}\lambda_j}\abs{\Delta(\lambda_2,\ldots,\lambda_m)}^2\prod^m_{j=2}\lambda^{\alpha-1}_j\dif\lambda_j\notag\\
&&=(1-\lambda_1)^{(m-1)(m+\alpha-2)+k-1}\times\notag\\
&&~~~\int^\infty_0\cdots\int^\infty_0\Pa{\prod^k_{j=1}x_j}\delta\Pa{1-\sum^{m-1}_{j=1}x_j}
\abs{\Delta(x_1,\ldots,x_{m-1})}^2\prod^{m-1}_{j=1}x^{\alpha-1}_j\dif
x_j\notag\\
&&=(1-\lambda_1)^{(m-1)(m+\alpha-2)+k-1}\frac{\prod^{m-1}_{j=1}\Gamma(\alpha+j-1)\Gamma(1+j)}{\Gamma((m-1)(m+\alpha-2)+k)}\frac{\Gamma(m+\alpha-1)}{\Gamma(m+\alpha-1-k)}.
\end{eqnarray}
Here we used Proposition~\ref{prop:product-delta}.
Next, we calculate the integral
\begin{eqnarray*}
\int^1_0 \dif\lambda_1\lambda^t_1
\int^\infty_0\cdots\int^\infty_0e_k
\delta\Pa{(1-\lambda_1)-\sum^m_{j=2}\lambda_j}\abs{\Delta(\lambda_2,\ldots,\lambda_m)}^2\prod^m_{j=2}\lambda^{\alpha-1}_j\dif\lambda_j.
\end{eqnarray*}
(1). If $k=0$, then
\begin{eqnarray*}
&&\int^1_0 \dif\lambda_1\lambda^t_1\int^\infty_0\cdots\int^\infty_0
\delta\Pa{(1-\lambda_1)-\sum^m_{j=2}\lambda_j}\abs{\Delta(\lambda_2,\ldots,\lambda_m)}^2\prod^m_{j=2}\lambda^{\alpha-1}_j\dif\lambda_j\\
&&=\frac{\prod^{m-1}_{j=1}\Gamma(\alpha+j-1)\Gamma(1+j)}{\Gamma((m-1)(m+\alpha-2))}\times
\int^1_0
\lambda^t_1(1-\lambda_1)^{(m-1)(m+\alpha-2)-1}\dif\lambda_1\\
&&=\frac{\prod^{m-1}_{j=1}\Gamma(\alpha+j-1)\Gamma(1+j)}{\Gamma((m-1)(m+\alpha-2))}\times\frac{\Gamma(t+1)\Gamma((m-1)(m+\alpha-2))}{\Gamma(t+1+(m-1)(m+\alpha-2))}\\
&&=\frac{\Gamma(t+1)\prod^{m-1}_{j=1}\Gamma(\alpha+j-1)\Gamma(1+j)}{\Gamma(t+1+(m-1)(m+\alpha-2))}.
\end{eqnarray*}
By taking the derivative with respect to $t$ on both sides, we get
\begin{eqnarray*}
&&\int^1_0
\dif\lambda_1\lambda^t_1\ln\lambda_1\int^\infty_0\cdots\int^\infty_0
\delta\Pa{(1-\lambda_1)-\sum^m_{j=2}\lambda_j}\abs{\Delta(\lambda_2,\ldots,\lambda_m)}^2\prod^m_{j=2}\lambda^{\alpha-1}_j\dif\lambda_j\\
&&=\Br{\psi(t+1)-\psi(t+1+(m-1)(m+\alpha-2))}\frac{\Gamma(t+1)\prod^{m-1}_{j=1}\Gamma(\alpha+j-1)\Gamma(1+j)}{\Gamma(t+1+(m-1)(m+\alpha-2))}.
\end{eqnarray*}
For $t=2(m-1)+\alpha$, we have
\begin{eqnarray*}
&&\int^1_0
\dif\lambda_1\lambda^{2(m-1)+\alpha}_1\ln\lambda_1\int^\infty_0\cdots\int^\infty_0
\delta\Pa{(1-\lambda_1)-\sum^m_{j=2}\lambda_j}\abs{\Delta(\lambda_2,\ldots,\lambda_m)}^2\prod^m_{j=2}\lambda^{\alpha-1}_j\dif\lambda_j\\
&&=\Br{\psi(2(m-1)+\alpha+1)-\psi(m(m+\alpha-1)+1)}\frac{\Gamma(2(m-1)+\alpha+1)\prod^{m-1}_{j=1}\Gamma(\alpha+j-1)\Gamma(1+j)}{\Gamma(m(m+\alpha-1)+1)}.
\end{eqnarray*}

(2). If $1\leqslant k\leqslant m-1$, then
\begin{eqnarray*}
&&\int^1_0 \dif\lambda_1\lambda^t_1
\int^\infty_0\cdots\int^\infty_0e_k
\delta\Pa{(1-\lambda_1)-\sum^m_{j=2}\lambda_j}\abs{\Delta(\lambda_2,\ldots,\lambda_m)}^2\prod^m_{j=2}\lambda^{\alpha-1}_j\dif\lambda_j\\
&&=\binom{m-1}{k}\int^1_0
\dif\lambda_1\lambda^t_1\int^\infty_0\cdots\int^\infty_0\Pa{\prod^k_{j=1}\lambda_{j+1}}\delta\Pa{(1-\lambda_1)-\sum^m_{j=2}\lambda_j}\abs{\Delta(\lambda_2,\ldots,\lambda_m)}^2\prod^m_{j=2}\lambda^{\alpha-1}_j\dif\lambda_j\\
&&=\binom{m-1}{k}\frac{\prod^{m-1}_{j=1}\Gamma(\alpha+j-1)\Gamma(1+j)}{\Gamma((m-1)(m+\alpha-2)+k)}\prod^k_{j=1}(m+\alpha-j-1)\times\int^1_0
\lambda^t_1 (1-\lambda_1)^{(m-1)(m+\alpha-2)+k-1}\dif\lambda_1\\
&&=\binom{m-1}{k}\frac{\prod^{m-1}_{j=1}\Gamma(\alpha+j-1)\Gamma(1+j)}{\Gamma((m-1)(m+\alpha-2)+k)}\prod^k_{j=1}(m+\alpha-j-1)\times\frac{\Gamma(t+1)\Gamma((m-1)(m+\alpha-2)+k)}{\Gamma(t+1+(m-1)(m+\alpha-2)+k)}\\
&&=\binom{m-1}{k}\frac{\Gamma(t+1)\prod^{m-1}_{j=1}\Gamma(\alpha+j-1)\Gamma(1+j)}{\Gamma(t+1+(m-1)(m+\alpha-2)+k)}\prod^k_{j=1}(m+\alpha-j-1)\\
&&=\binom{m-1}{k}\frac{\Gamma(t+1)\prod^{m-1}_{j=1}\Gamma(\alpha+j-1)\Gamma(1+j)}{\Gamma(t+1+(m-1)(m+\alpha-2)+k)}\frac{\Gamma(m+\alpha-1)}{\Gamma(m+\alpha-1-k)}.
\end{eqnarray*}
By taking the derivative with respect to $t$, we get
\begin{eqnarray*}
&&\int^1_0 \dif\lambda_1\lambda^t_1\ln\lambda_1
\int^\infty_0\cdots\int^\infty_0e_k
\delta\Pa{(1-\lambda_1)-\sum^m_{j=2}\lambda_j}\abs{\Delta(\lambda_2,\ldots,\lambda_m)}^2\prod^m_{j=2}\lambda^{\alpha-1}_j\dif\lambda_j\\
&&=\binom{m-1}{k}\Br{\psi(t+1)-\psi(t+1+(m-1)(m+\alpha-2)+k)}\\
&&~~~\times\frac{\Gamma(t+1)\prod^{m-1}_{j=1}\Gamma(\alpha+j-1)\Gamma(1+j)}{\Gamma(t+1+(m-1)(m+\alpha-2)+k)}\frac{\Gamma(m+\alpha-1)}{\Gamma(m+\alpha-1-k)}.
\end{eqnarray*}
For $t=2(m-1)+\alpha-k$, we have
\begin{eqnarray*}
&&\int^1_0 \dif\lambda_1\lambda^{2(m-1)+\alpha-k}_1\ln\lambda_1
\int^\infty_0\cdots\int^\infty_0e_k
\delta\Pa{(1-\lambda_1)-\sum^m_{j=2}\lambda_j}\abs{\Delta(\lambda_2,\ldots,\lambda_m)}^2\prod^m_{j=2}\lambda^{\alpha-1}_j\dif\lambda_j\\
&&=\binom{m-1}{k}\Br{\psi(2(m-1)+\alpha-k+1)-\psi(m(m+\alpha-1)+1)}\\
&&~~~\times\frac{\Gamma(2(m-1)+\alpha-k+1)\prod^{m-1}_{j=1}\Gamma(\alpha+j-1)\Gamma(1+j)}{\Gamma(m(m+\alpha-1)+1)}\frac{\Gamma(m+\alpha-1)}{\Gamma(m+\alpha-1-k)}.\\
\end{eqnarray*}

In summary, we get
\begin{eqnarray}
\label{eq:bef-fin}
&&\cI^Q_m(\alpha,1)\nonumber\\
&&=-mC^{(\alpha,1)}_m\left[\int^1_0\dif\lambda_1\lambda^{2(m-1)+\alpha}_1\ln\lambda_1e_0
\int^\infty_0\cdots\int^\infty_0\delta\Pa{(1-\lambda_1)-\sum^m_{j=2}\lambda_j}\abs{\Delta(\lambda_2,\ldots,\lambda_m)}^2\prod^m_{j=2}\lambda^{\alpha-1}_j\dif\lambda_j\right.\nonumber\\
&&~~~\left.+\sum^{m-1}_{k=1}(-1)^k\int^1_0\dif\lambda_1\lambda^{2(m-1)+\alpha-k}_1\ln\lambda_1
\int^\infty_0 \cdots\int^\infty_0e_k
\delta\Pa{(1-\lambda_1)-\sum^m_{j=2}\lambda_j}\abs{\Delta(\lambda_2,\ldots,\lambda_m)}^2\prod^m_{j=2}\lambda^{\alpha-1}_j\dif\lambda_j\right]\nonumber\\
&&=-mC^{(\alpha,1)}_m\left[\sum^{m-1}_{k=0}(-1)^k\binom{m-1}{k}\Br{\psi(2(m-1)+\alpha-k+1)-\psi(m(m+\alpha-1)+1)}\right.\nonumber\\
&&~~~~~~~~~~~~~~~~~~\left.\times\frac{\Gamma(2(m-1)+\alpha-k+1)\prod^{m-1}_{j=1}\Gamma(\alpha+j-1)\Gamma(1+j)}{\Gamma(m(m+\alpha-1)+1)}\frac{\Gamma(m+\alpha-1)}{\Gamma(m+\alpha-1-k)}\right]\nonumber\\
&&=-\frac{mC^{(\alpha,1)}_m{\Gamma(m+\alpha-1)}\prod^{m-1}_{j=1}\Gamma(\alpha+j-1)\Gamma(1+j)}{\Gamma(m(m+\alpha-1)+1)}\nonumber\\
&&~~~~~~~~~\times\sum^{m-1}_{k=0}(-1)^k\binom{m-1}{k}\Br{\psi(2(m-1)+\alpha+1-k)-\psi(m(m+\alpha-1)+1)}\frac{\Gamma(2(m-1)+\alpha+1-k)}{\Gamma(m+\alpha-1-k)}\nonumber\\
&&=-\frac1{m(m+\alpha-1)}\left[\sum^{m-1}_{k=0}(-1)^k\Br{\psi(2(m-1)+\alpha+1-k)-\psi(m(m+\alpha-1)+1)}\right.\notag\\
&&~~~~~~~~~~~~~~~~~~~~~~~~~~~~~~~~~~\times\left.\frac{\Gamma(2(m-1)+\alpha+1-k)}{\Gamma(k+1)\Gamma(m-k)\Gamma(m+\alpha-1-k)}\right].
\end{eqnarray}
Let us define
\begin{align}
\label{eq:g1} g_{mk}(\alpha)
=&\psi(m(m+\alpha-1)+1)-\psi(2(m-1)+\alpha+1-k),
\end{align}
and
\begin{align}
\label{eq:u1}
 u_{mk}(\alpha)=\frac{(-1)^k\Gamma(2(m-1)+\alpha+1-k)}{\Gamma(k+1)\Gamma(m-k)\Gamma(m+\alpha-1-k)}.
\end{align}
Then, from Eq. (\ref{eq:bef-fin}), we have
 \begin{align}
\cI^Q_m(\alpha,1)=\frac1{m(m+\alpha-1)} \sum^{m-1}_{k=0}
g_{mk}(\alpha) u_{mk}(\alpha).
\end{align}
This completes the proof of Proposition \ref{prop1} of main text. For $(\alpha,\gamma)=(n-m+1,1)$, we have
\begin{align}
\label{eq:sp-coh1} &\cI^Q_m(n-m+1,1)
=\frac1{mn}\sum^{m-1}_{k=0}g_{mk}(n-m+1) u_{mk}(n-m+1).
\end{align}
If $m=n$, this situation corresponds to the measure induced by the
Hilbert-Schmidt distance \cite{Zyczkowski2001}, then we have
\begin{align}
\label{eq:av-sub1} \cI^Q_m(1,1)
&=\frac1{m^2}\sum^{m-1}_{k=0}g_{mk}(1) u_{mk}(1).
\end{align}
In Eqs. (\ref{eq:sp-coh1}) and (\ref{eq:av-sub1}), the functions $g_{mk}$ and $u_{mk}$
are given by Eqs. (\ref{eq:g1}) and (\ref{eq:u1}).

\subsection{The proof of Lemma \ref{lem:average-subentropy} of the main
text}\label{subsect:Lemma-proof}

Now that
\begin{eqnarray}
\cI^Q_m(n-m+1,1) =
\frac1{mn}\sum^{m-1}_{k=0}\frac{(-1)^k\Gamma(m+n-k)}{k!\Gamma(m-k)\Gamma(n-k)}\Br{\psi(mn+1)-\psi(m+n-k)},
\end{eqnarray}
by the fact that $\psi(\nu+1)=H_\nu-\gamma_{\text{Euler}}$ for any
positive integer $\nu$, it follows that
\begin{eqnarray}
\cI^Q_m(n-m+1,1) =
\frac1{mn}\sum^{m-1}_{k=0}\frac{(-1)^k\Gamma(m+n-k)}{k!\Gamma(m-k)\Gamma(n-k)}(H_{mn}-H_{m+n-1-k}).
\end{eqnarray}
To obtain the closed formula: $\cI^Q(n-m+1,1)=1+H_{mn}-H_m-H_n$, it
suffices to prove the following identities:
\begin{eqnarray}
\sum^{m-1}_{k=0}\frac{(-1)^k\Gamma(m+n-k)}{k!\Gamma(m-k)\Gamma(n-k)}&=&mn,\\
\sum^{m-1}_{k=0}\frac{(-1)^k\Gamma(m+n-k)}{k!\Gamma(m-k)\Gamma(n-k)}H_{m+n-1-k}&=&mn(H_m+H_n-1).
\end{eqnarray}
To this end, we need to fix some notations.
$$
\binom{z}{n} := \frac{z(z-1)\cdots
(z-n+1)}{n!}~~\text{for}~~n\in\natural~~\text{and arbitrary}~~z.
$$
Binomial coefficients can be generalized to multinomial coefficients
which is defined to be the number:
$$
\binom{n}{m_1,m_2,\ldots,m_q} = \frac{n!}{m_1!m_2!\cdots m_q!},
$$
where $n=\sum^q_{j=1}m_j$. For any real $z\in\real$ and positive
integer pairs $(m,n)$ with $m\leqslant n$, it holds
that\footnote{This identity can be referred to the link:
\url{https://en.wikipedia.org/wiki/Binomial_coefficient}}
\begin{eqnarray}\label{eq:wjf}
\binom{z}{m}\binom{z}{n} = \sum^m_{k=0}
\binom{m+n-k}{k,m-k,n-k}\binom{z}{m+n-k}.
\end{eqnarray}
This identity had appeared earlier in Gould's book \cite[Eq.~(6.44),
pp57]{Gould1972}, and it was due to Riordan \cite{Riordan1979}. We
know that, for any given real $z$ and positive integer $k$,
\begin{eqnarray}
\binom{-1}{k}&=&(-1)^k,\\
\binom{z}{k}&=&\frac zk\binom{z-1}{k-1},\\
\frac{\dif}{\dif z}\binom{z}{k} &=&
\binom{z}{k}\sum^{k-1}_{i=0}\frac1{z-i}.
\end{eqnarray}
Then
\begin{eqnarray}
\frac{\dif}{\dif z}\binom{z-1}{m-1} &=&
\binom{z-1}{m-1}\sum^{m-2}_{i=0}\frac1{(z-1)-i} =
\binom{z-1}{m-1}\sum^{m-1}_{i=1}\frac1{z-i},
\end{eqnarray}
implying that
\begin{eqnarray}
\left.\frac{\dif}{\dif z}\right|_{z=-1}\binom{z-1}{m-1} =
\binom{-2}{m-1}\sum^{m-1}_{i=1}\frac1{-1-i} = (-1)^m m(H_m - 1).
\end{eqnarray}
Similarly
\begin{eqnarray}
\left.\frac{\dif}{\dif z}\right|_{z=-1}\binom{z-1}{n-1} =
\binom{-2}{n-1}\sum^{n-1}_{i=1}\frac1{-1-i} = (-1)^n n(H_n - 1).
\end{eqnarray}
We also see that
\begin{eqnarray}
\left.\frac{\dif}{\dif z}\right|_{z=-1}\binom{z}{k} =
\binom{-1}{k}\sum^{k-1}_{i=0}\frac1{-1-i} = (-1)^{k+1}H_k.
\end{eqnarray}
Note that
\begin{eqnarray*}
\binom{z}{m}\binom{z}{n} = \sum^m_{k=0}
\frac{(m+n-k)!}{(m-k)!(n-k)!k!}\binom{z}{m+n-k},
\end{eqnarray*}
Replacement of $(m,n)$ by $(m-1,n-1)$ gives rise to
\begin{eqnarray*}
\binom{z}{m-1}\binom{z}{n-1} &=& \sum^{m-1}_{k=0}
\frac{(m+n-2-k)!}{(m-1-k)!(n-1-k)!k!}\binom{z}{m+n-2-k}\\
&=& \sum^{m-1}_{k=0}
\frac{(m+n-1-k)!}{(m-1-k)!(n-1-k)!k!}\cdot\frac1{n+m-1-k}\binom{z}{m+n-2-k}\\
&=& \sum^{m-1}_{k=0}
\frac{\Gamma(m+n-k)}{\Gamma(m-k)\Gamma(n-k)k!}\cdot\frac1{n+m-1-k}\binom{z}{m+n-2-k},
\end{eqnarray*}
that is,
\begin{eqnarray*}
\binom{z}{m-1}\binom{z}{n-1} = \sum^{m-1}_{k=0}
\frac{\Gamma(m+n-k)}{\Gamma(m-k)\Gamma(n-k)k!}\cdot\frac1{n+m-1-k}\binom{z}{m+n-2-k},
\end{eqnarray*}
multiplying both sides by $(z+1)$, we get
\begin{eqnarray*}
(z+1)\binom{z}{m-1}\binom{z}{n-1} = \sum^{m-1}_{k=0}
\frac{\Gamma(m+n-k)}{\Gamma(m-k)\Gamma(n-k)k!}\cdot\frac{z+1}{n+m-1-k}\binom{(z+1)-1}{(m+n-1-k)-1}.
\end{eqnarray*}
Therefore
\begin{eqnarray*}
(z+1)\binom{z}{m-1}\binom{z}{n-1} = \sum^{m-1}_{k=0}
\frac{\Gamma(m+n-k)}{\Gamma(m-k)\Gamma(n-k)k!}\cdot\binom{z+1}{m+n-1-k}.
\end{eqnarray*}
Again, replacement of $z+1$ by $z$ gives rise to
\begin{eqnarray}\label{eq:AAAAA}
z\binom{z-1}{m-1}\binom{z-1}{n-1} = \sum^{m-1}_{k=0}
\frac{\Gamma(m+n-k)}{\Gamma(m-k)\Gamma(n-k)k!}\cdot\binom{z}{m+n-1-k}.
\end{eqnarray}
(i). Letting $z=-1$ in Eq.~\eqref{eq:AAAAA} gives rise to
\begin{eqnarray}
\sum^{m-1}_{k=0}
\frac{\Gamma(m+n-k)}{\Gamma(m-k)\Gamma(n-k)k!}(-1)^{m+n-1-k}
=-\binom{-2}{m-1}\binom{-2}{n-1} = (-1)^{m+n-1}mn.
\end{eqnarray}
Hence,
\begin{eqnarray}
\sum^{m-1}_{k=0}
\frac{(-1)^k\Gamma(m+n-k)}{k!\Gamma(m-k)\Gamma(n-k)} =mn.
\end{eqnarray}
(ii). Furthermore, differentiating both sides at $z=-1$ leads to
\begin{eqnarray*}
&&\sum^{m-1}_{k=0}
\frac{\Gamma(m+n-k)}{\Gamma(m-k)\Gamma(n-k)k!}(-1)^{m+n-k}H_{m+n-1-k}\\
&&= \left.\Br{\binom{z-1}{m-1}\binom{z-1}{n-1}\Pa{1+z\sum^{m-1}_{i=1}\frac1{z-i}+z\sum^{n-1}_{i=1}\frac1{z-i}}}\right|_{z=-1}\\
&&=(-1)^{m+n}mn(H_m+H_n-1).
\end{eqnarray*}
Finally, we divide by $(-1)^{m+n}$ on both sides and get the
conclusion:
\begin{eqnarray}
&&\sum^{m-1}_{k=0}
\frac{\Gamma(m+n-k)}{\Gamma(m-k)\Gamma(n-k)}\cdot\frac{(-1)^k}{k!}\cdot
H_{m+n-1-k} = mn(H_m+H_n-1).
\end{eqnarray}
Hence the result.

\subsection{The proof of Theorem~\ref{thm:gen-coh} of the main
text}\label{subsect:theorem-proof-coherence}

For $(\alpha,\gamma)=(n-m+1,1)$
the value of average subentropy $\cI^Q_m(n-m+1,1)$ is given by Eq.~\eqref{eq:sp-coh1}. From the results of Page \cite{Page1993} and others \cite{Foong1994, Jorge1995, Sen1996} it is also known that
\begin{eqnarray}
\label{eq:av-ent} \cI^S_m(n-m+1,1)=H_{mn}-H_n-\frac{m-1}{2n}.
\end{eqnarray}
Let $a_n=H_n-\ln n-\gamma_{\mathrm{Euler}}$. Clearly
$\lim_{n\to\infty}a_n=0$. Now
$$
\cI^S_m(n-m+1,1)= \Pa{\ln n - \frac{m-1}{2n}} + (a_{mn}-a_n).
$$
We get $\cI^S_m(1,1)\asymp \ln m-\frac12$ when $m$ becomes very
large.

The average  coherence of random mixed states is given by
\begin{eqnarray}
\label{eq:def-gen-coh1}
\overline{\sC_r}(\alpha,\gamma)&:=&\int\dif\mu_{\alpha,\gamma}(\rho)\sC_r(\rho)
= \int\dif\mu_{\alpha,\gamma}(U\Lambda U^\dagger)\sC_r(U\Lambda U^\dagger)\nonumber\\
&=&\int\dif\nu_{\alpha,\gamma}(\Lambda)\Br{\int\dif\mu_{\mathrm{Haar}}(U)S(\Pi(U\Lambda
U^\dagger)) - S(\Lambda)}\nonumber\\
&=& H_m-1 + \int\dif\nu_{\alpha,\gamma}(\Lambda)(Q(\Lambda) - S(\Lambda))\nonumber\\
&=& H_m-1 +\cI^Q_m(\alpha,\gamma) - \cI^S_m(\alpha,\gamma),
\end{eqnarray}
where $\cI^Q_m(\alpha,\gamma)=\int\dif\nu_{\alpha,\gamma}(\Lambda)Q(\Lambda)$ and $\cI^S_m(\alpha,\gamma)=\int\dif\nu_{\alpha,\gamma}(\Lambda)S(\Lambda)$. Also, we have used the fact that the average coherence of the isosepectral density matrices can be expressed in terms of the quantum subentropy, von Neumann entropy, and $m$-Harmonic number as follows \cite{Cheng2015}:
\begin{eqnarray*}
\overline{\sC}_r^{\mathrm{iso}}(\Lambda)&&:=\int\dif\mu_{\mathrm{Haar}}(U)\sC_r(U\Lambda U^\dagger)\nonumber\\
&&= H_m - 1 + Q(\Lambda)  - S(\Lambda).
\end{eqnarray*}
Here $Q(\Lambda)$ is the subentropy, given by Eq. (\ref{eq:sub}),
$S(\Lambda)$ is the von Neumann entropy of $\Lambda$ and $H_m =
\sum_{k=1}^{m} 1/k$ is the $m$-Harmonic number. Now using
Eqs.~\eqref{eq:sp-coh1} and (\ref{eq:av-ent}), in Eq.
(\ref{eq:def-gen-coh1}) completes the proof of the theorem and
$\overline{\sC_r}(n-m+1,1)$ is given by, via
Lemma~\ref{lem:average-subentropy},
\begin{eqnarray}
\label{eq:gen-coh} \overline{\sC_r}(n-m+1,1) &=& \frac{m-1}{2n} +
\Br{\cI^Q_m(n-m+1,1)-(1+H_{mn}-H_m-H_n)} = \frac{m-1}{2n}.
\end{eqnarray}
Similarly,
\begin{eqnarray}
\label{eq:HS} \overline{\sC_r}(1,1) = \frac{m-1}{2m}.
\end{eqnarray}

\subsection{The proof of Theorem \ref{th:con-ent} of the main
text}\label{subsect:theorem-proof-concentration}

To prove Theorem~\ref{th:con-ent} of the main text, we use the
concentration of measure phenomenon and in particular, L\'evy's
lemma \cite{Ledoux2005, Hayden2006}, which can be stated as follows:

\smallskip
\noindent {\bf L\'evy's Lemma:} Let $\mathcal{F} : \mathbb{S}^{k} \rightarrow \mathbb{R}$ be a Lipschitz function from $k$-sphere to real line with the Lipschitz constant $\eta$ (with respect to the Euclidean norm) and a point $X \in \mathbb{S}^{k}$ be chosen uniformly at random. Then, for all $\epsilon>0$,
\begin{align}
\label{Levy-lemma} \mathrm{Pr} \left\{|\mathcal{F}(X) - \mathbb{E}
\mathcal{F}|  > \epsilon \right\} \leqslant 2\exp\left( -\frac{(k+1)
\epsilon^2}{ 9\pi^3 \eta^2\ln 2 } \right).
\end{align}
Here $\mathbb{E}(\mathcal{F})$ is the mean value of $\mathcal{F}$.
But before we present the proof we need to find the Lipschitz
constant for the relevant function on $\mathbb{S}^{k}$ which is
$G:\mathbb{S}^{nm}\mapsto \mathbb{R}$, defined as
$G\left(\ket{\psi_{AB}}\right) = S\left(\rho_A^{(\mathrm{d})}\right)
- S\left(\rho_{A}\right) = \sC_r(\rho_A)$ where
$\rho_A^{\mathrm{(d)}}$ is the diagonal part of
$\rho_A=\mathrm{Tr}_B(\proj{\psi_{AB}})$.

\begin{lem}
The function $\tilde{F}:\mathbb{S}^{mn}\mapsto \mathbb{R}$, defined
as $\tilde{F}(\ket{\psi_{AB}})=S(\rho^A)$ where
$\rho_A=\mathrm{Tr}_B(\proj{\psi_{AB}})$ and $S$ is the von Neumann
entropy, is a Lipschitz continuous function with Lipschitz constant
$\sqrt{8}\ln m$.
\end{lem}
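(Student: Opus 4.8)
The plan is to control the Lipschitz constant of $\tilde F$ by the supremum of the norm of its gradient over the sphere $\mathbb S^{mn}$. The map $\ket{\psi}\mapsto\rho_A=\Ptr{B}{\proj{\psi}}$ is smooth, and $\tilde F$ is differentiable wherever $\rho_A$ is invertible on its support; this is a dense open subset of $\mathbb S^{mn}$, and since $\tilde F$ is everywhere continuous, a uniform bound $\norm{\nabla\tilde F}\le\eta$ there gives Lipschitz constant $\eta$. So the task reduces to computing the directional derivative of $\tilde F$ along an arbitrary unit vector $\ket{v}$ and bounding it by $\sqrt8\,\ln m$.

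First I would compute the derivative. Writing $\ket{\psi(t)}=\ket{\psi}+t\ket{v}+O(t^2)$ and $\rho_A(t)=\Ptr{B}{\proj{\psi(t)}}$, one has $\dot\rho_A=\Ptr{B}{\ket{v}\bra{\psi}+\ket{\psi}\bra{v}}$. Using $\tfrac{\dif}{\dif t}S(\rho_A)=-\Tr{(\ln\rho_A+\I)\dot\rho_A}=-\Tr{\dot\rho_A\ln\rho_A}$ (the $\I$ term drops since $\Tr{\dot\rho_A}=0$) and pulling the partial trace through,
\[
\left.\frac{\dif}{\dif t}S(\rho_A)\right|_{t=0}=-2\,\re\bra{\psi}(\ln\rho_A\ot\I_B)\ket{v}.
\]
By Cauchy--Schwarz this is at most $2\,\norm{(\ln\rho_A\ot\I_B)\ket{\psi}}$ in absolute value, and $\norm{(\ln\rho_A\ot\I_B)\ket{\psi}}^2=\Tr{\rho_A\ln^2\rho_A}=\sum_i\lambda_i\ln^2\lambda_i$, where $\{\lambda_i\}$ is the spectrum of $\rho_A$. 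Hence $\norm{\nabla\tilde F}\le 2\sqrt{\sum_i\lambda_i\ln^2\lambda_i}$.

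The crux, and the main obstacle, is then the spectral inequality
\[
\Tr{\rho_A\ln^2\rho_A}=\sum_{i=1}^{m}\lambda_i\ln^2\lambda_i\le 2(\ln m)^2 ,
\]
valid for every probability vector $(\lambda_1,\dots,\lambda_m)$. My plan is to split the indices into a \emph{heavy} set $\set{i:\lambda_i\ge 1/m}$ and a \emph{light} set $\set{i:\lambda_i<1/m}$. On the heavy set $0\le-\ln\lambda_i\le\ln m$, so $\lambda_i\ln^2\lambda_i\le(\ln m)(-\lambda_i\ln\lambda_i)$; summing and using $\sum_i(-\lambda_i\ln\lambda_i)=S(\rho_A)\le\ln m$ bounds the heavy contribution by $(\ln m)^2$. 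On the light set I would exploit that $x\mapsto x\ln^2 x$ is increasing on $(0,e^{-2})$: once $m\ge 8$ one has $1/m<e^{-2}$, so each light term is at most $(1/m)\ln^2 m$ and there are at most $m$ of them, again contributing at most $(\ln m)^2$; adding the two regimes gives $2(\ln m)^2$. For the finitely many cases $3\le m\le 7$ the cruder per-term bound $x\ln^2 x\le 4e^{-2}$ gives $\sum_i\lambda_i\ln^2\lambda_i\le 4e^{-2}m$, which one checks directly to be below $2(\ln m)^2$ for each such $m$.

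Combining, $\norm{\nabla\tilde F}\le 2\sqrt{2(\ln m)^2}=2\sqrt2\,\ln m=\sqrt8\,\ln m$, so $\tilde F$ is Lipschitz with constant $\sqrt8\,\ln m$, as claimed. I expect the derivative step to be routine once the identity $\tfrac{\dif}{\dif t}S=-\Tr{\dot\rho\ln\rho}$ is in hand; the real work is the two-regime estimate of $\Tr{\rho\ln^2\rho}$, where the threshold $1/m$ together with the monotonicity of $x\ln^2 x$ below $e^{-2}$ is exactly what produces the factor $2$ and hence the stated constant. (This lemma is the building block for Theorem~\ref{th:con-ent}: the full function $\sC_r(\rho_A)=S(\rho_A^{(\mathrm{d})})-S(\rho_A)$ is a difference of two such entropies---the dephased term $S(\rho_A^{(\mathrm{d})})$ admits the identical estimate since $\rho_A^{(\mathrm{d})}$ is a density matrix on $\complex^m$---so it has Lipschitz constant $2\sqrt8\,\ln m$, which is what yields the factor $144\pi^3$ in the exponent.)
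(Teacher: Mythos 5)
The paper does not actually prove this lemma---it simply cites Hayden--Leung--Winter \cite{Hayden2006}---so your attempt to supply a self-contained argument is in the right spirit, and your overall strategy (bound the gradient, reduce to a spectral inequality for $\sum_i\lambda_i\ln^2\lambda_i$) is exactly the one used in \cite{Hayden2006} and reproduced in the paper's proof of the companion lemma for $F(\ket{\psi_{AB}})=S(\rho_A^{(\mathrm{d})})$. Your two-regime proof of $\sum_i\lambda_i\ln^2\lambda_i\le 2(\ln m)^2$ is correct and is a nice elementary ingredient that the paper never supplies.

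However, there is a genuine gap in the gradient step. You drop the identity term in $\tfrac{\dif}{\dif t}S(\rho_A)=-\Tr{\dot\rho_A(\ln\rho_A+\I)}$ on the grounds that $\Tr{\dot\rho_A}=0$; but $\Tr{\dot\rho_A}=2\,\re\iinner{\psi}{v}$, which vanishes only when $\ket{v}$ is tangent to the sphere at $\ket{\psi}$, not for an arbitrary unit vector as you state. If you restrict to tangent directions, you have bounded the intrinsic gradient and hence the Lipschitz constant with respect to \emph{geodesic} distance; converting to the Euclidean (chordal) norm---which is what the paper's statement of L\'evy's lemma and the subsequent estimate $\norm{\ket{\psi_{AB}}-\ket{\phi_{AB}}}_2$ in the $G$-lemma require---costs a factor of up to $\pi/2$. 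If instead you keep the $\I$ term and bound the ambient gradient (the route of \cite{Hayden2006}, extending the function to the unit ball so that straight segments between sphere points stay in the domain), you get $2\bigl(\sum_i\lambda_i(1+\ln\lambda_i)^2\bigr)^{1/2}\le 2\sqrt{1+\sum_i\lambda_i\ln^2\lambda_i}$, and your bound $2(\ln m)^2$ only yields $2\sqrt{1+2(\ln m)^2}>\sqrt{8}\ln m$. Reaching the stated constant along this route requires the sharper inequality $\sum_i\lambda_i\ln^2\lambda_i\le(\ln m)^2$ for $m\ge 3$ (which the paper asserts without proof in the $F$-lemma, and which fails for $m=2$); your heavy/light split does not reach it, since each regime already saturates $(\ln m)^2$ separately. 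So either the constant degrades by a factor $\pi/2$, or the spectral inequality must be sharpened---as written, the claimed constant $\sqrt{8}\ln m$ with respect to the Euclidean norm is not established.
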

\begin{proof}
 The proof is given in Ref. \cite{Hayden2006}.
\end{proof}

\begin{lem}
The function $F:\mathbb{S}^{nm}\mapsto \mathbb{R}$, defined as
$F(\ket{\psi_{AB}})=S(\rho_A^{\mathrm{(d)}})$ where
$\rho^{A\mathrm{(d)}}$ is the diagonal part of
$\rho_A=\mathrm{Tr}_B(\proj{\psi_{AB}})$ and $S$ is the von Neumann
entropy, is a Lipschitz continuous function with Lipschitz constant
$\sqrt{8}\ln m$.
\end{lem}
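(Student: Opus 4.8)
The plan is to deduce this from the preceding lemma rather than to re-run the entropy-continuity estimate. The point is that the diagonal part $\rho_A^{\mathrm{(d)}}$ is obtained from $\rho_A$ by the completely dephasing channel $\Pi(\cdot)=\sum_{i}\proj{i}(\cdot)\proj{i}$ on $A$, and a Stinespring-type dilation of $\Pi$ lets one rewrite $F$ as the ``entropy of the $A$-marginal'' of a purification living on a slightly larger bipartite space. Since the preceding lemma bounds exactly that kind of function, with a constant that does not depend on the size of the traced-out system, the bound transfers verbatim.

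Concretely, I would introduce the index-copying isometry $V:\mathbb{C}^m\ot\mathbb{C}^n\to\mathbb{C}^m\ot(\mathbb{C}^n\ot\mathbb{C}^m)$ determined on the computational basis by $V\ket{i}_A\ket{a}_B=\ket{i}_A\ket{a}_B\ket{i}_{B'}$, and set $\ket{\psi'_{ABB'}}=V\ket{\psi_{AB}}$. A one-line computation gives $\Ptr{BB'}{\proj{\psi'_{ABB'}}}=\sum_i p_i\proj{i}_A=\rho_A^{\mathrm{(d)}}$, where $p_i=\bra{i}\rho_A\ket{i}$, so that $F(\ket{\psi_{AB}})=S\Pa{\Ptr{BB'}{\proj{\psi'_{ABB'}}}}$ is precisely the von Neumann entropy of the $A$-marginal of $\ket{\psi'}$.

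I would then apply the preceding lemma with the composite register $BB'$, of dimension $nm$, playing the role of the environment. Its Lipschitz constant $\sqrt{8}\ln m$ depends only on $\dim A=m$ and not on the dimension of the environment, so the map $\ket{\psi'}\mapsto S(\rho_A^{\psi'})$ is $\sqrt{8}\ln m$-Lipschitz on the enlarged sphere. Finally, because $V$ is an isometry it preserves Euclidean distances, $\Norm{V\ket{\psi}-V\ket{\phi}}_2=\Norm{\ket{\psi}-\ket{\phi}}_2$, so precomposition by $V$ leaves the Lipschitz constant unchanged, yielding $\abs{F(\ket{\psi})-F(\ket{\phi})}\leqslant\sqrt{8}\,(\ln m)\Norm{\ket{\psi}-\ket{\phi}}_2$.

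The only things that need checking are the two bookkeeping facts that make this reduction lossless: that $V$ is genuinely an isometry, and that the constant in the preceding lemma is independent of the environment dimension; both are immediate. As an internal consistency check one can differentiate $F$ directly: with $p_i=\sum_a\abs{\psi_{ia}}^2$ the ambient gradient satisfies $\Norm{\nabla F}_2^2=4\sum_{i=1}^m p_i(1+\ln p_i)^2$, the same expression (now in the diagonal entries rather than the eigenvalues) that controls the preceding lemma, so the matching constant is no accident; the residual analytic input would then be the elementary bound $\sum_{i=1}^m p_i(1+\ln p_i)^2\leqslant 2(\ln m)^2$ for probability vectors with $m\geqslant3$, which I expect to be the only genuinely technical point and which a Lagrange-multiplier analysis should settle. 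I would nonetheless present the isometry reduction as the main argument, since it inherits the Euclidean (rather than merely geodesic) Lipschitz bound directly from the preceding lemma and so avoids this extra estimate altogether.
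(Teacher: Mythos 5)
Your argument is correct, but it takes a genuinely different route from the paper's. The paper simply re-runs the gradient computation of Hayden--Leung--Winter with the diagonal entries $p_i=\sum_j\abs{\psi_{ij}}^2$ in place of eigenvalues: it bounds $\sup\nabla F\cdot\nabla F=4\sum_i p_i(1+\ln p_i)^2\leqslant 4\bigl(1+\sum_i p_i(\ln p_i)^2\bigr)\leqslant 4\bigl(1+(\ln m)^2\bigr)\leqslant 8(\ln m)^2$ for $m\geqslant 3$ --- i.e.\ exactly the ``residual analytic input'' you flag at the end, which the paper does not avoid but dispatches in two lines. Your dilation argument instead realizes the dephasing map via the index-copying isometry $V$, checks $\Ptr{BB'}{V\proj{\psi_{AB}}V^\dagger}=\rho_A^{(\mathrm{d})}$, and pulls the bound back from the preceding lemma; all three ingredients (that $V$ is an isometry, that the reduced state of $\ket{\psi'}$ is the dephased state, and that precomposition with an isometry preserves the Euclidean Lipschitz constant) check out. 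The one point you should make explicit is that the preceding lemma, as stated in the paper, fixes the environment to be $n$-dimensional (the domain is $\mathbb{S}^{mn}$), whereas you need it for an environment of dimension $nm$; this is harmless because the Hayden--Leung--Winter estimate depends only on $\dim A=m$, but it means you are invoking a mildly stronger form of the lemma than the one literally written. What your route buys is conceptual economy --- it explains \emph{why} the constant for the dephased entropy coincides with that for the eigenvalue entropy, rather than observing it post hoc --- at the cost of importing the dilation machinery; the paper's route is self-contained and three lines long.
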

\begin{proof}
We follow the proof strategy of Ref. \cite{Hayden2006}. Let
$\ket{\psi_{AB}} = \sum_{i=1}^{m} \sum_{j=1}^{n} \psi_{ij}
\ket{ij}^{AB}$ and therefore, $\rho_A^{\mathrm{(d)}} =
\sum_{i=1}^{d}p_i \ket{i}\bra{i}$ with
$p_i=\sum_{j}\abs{\psi_{ij}}^2$. Now, $ F(\psi_{AB}) =
-\sum_{i=1}^{m} p_i \ln p_i$. The Lipschitz constant for $F$ can be
bounded as follows:
\begin{align}
\eta^2 := \sup_{\bra{\psi}\psi\rangle\leqslant 1} \nabla F \cdot
\nabla F
&= 4\sum_{i=1}^{m}p_i\left[ 1 + \ln p_i \right]^2\nonumber\\
&\leqslant 4 \left(1+ \sum_{i=1}^{m} p_i(\ln p_i)^2\right)\nonumber\\
&\leqslant 4 \left(1+ (\ln m)^2\right)\leq 8(\ln m)^2,\nonumber
\end{align}
where the last inequality is true for $m\geqslant 3$. Therefore,
$\eta \leqslant \sqrt{8}\ln m$ for $d\geqslant 3$.
\end{proof}
\begin{lem}
The function $G:\mathbb{S}^{nm}\mapsto \mathbb{R}$, defined as
$G\Pa{\ket{\psi_{AB}}} = S\Pa{\rho_A^{(\mathrm{d})}} -
S\Pa{\rho_{A}}$ where $\rho_A^{\mathrm{(d)}}$ is the diagonal part
of $\rho_A=\mathrm{Tr}_B(\proj{\psi_{AB}})$ and $S$ is the von
Neumann entropy, is a Lipschitz continuous function with the
Lipschitz constant $2\sqrt{8}\ln m$.
\end{lem}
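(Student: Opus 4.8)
The plan is to recognize that $G$ is nothing more than the difference of the two functions whose Lipschitz constants were established in the two preceding lemmas, and then to invoke the elementary fact that the Lipschitz seminorm is subadditive. Writing $\tilde{F}\Pa{\ket{\psi_{AB}}} = S(\rho_A)$ and $F\Pa{\ket{\psi_{AB}}} = S\Pa{\rho_A^{(\mathrm{d})}}$, we have directly from the definitions that $G = F - \tilde{F}$, so the whole argument reduces to combining the two known bounds via the triangle inequality.

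First I would fix two arbitrary points $\ket{\psi}, \ket{\phi} \in \mathbb{S}^{nm}$ and expand
\begin{eqnarray*}
\abs{G(\ket{\psi}) - G(\ket{\phi})}
&=& \Abs{\Pa{F(\ket{\psi}) - F(\ket{\phi})} - \Pa{\tilde{F}(\ket{\psi}) - \tilde{F}(\ket{\phi})}}\\
&\leqslant& \abs{F(\ket{\psi}) - F(\ket{\phi})} + \abs{\tilde{F}(\ket{\psi}) - \tilde{F}(\ket{\phi})}.
\end{eqnarray*}
Next I would apply the two preceding lemmas: since each of $F$ and $\tilde{F}$ is Lipschitz with constant $\sqrt{8}\ln m$ (for $m\geqslant 3$), each term on the right is bounded by $\sqrt{8}\ln m$ times $\norm{\ket{\psi} - \ket{\phi}}$. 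Summing the two contributions then yields
\begin{eqnarray*}
\abs{G(\ket{\psi}) - G(\ket{\phi})} \leqslant 2\sqrt{8}\,(\ln m)\norm{\ket{\psi} - \ket{\phi}},
\end{eqnarray*}
which is precisely the claimed Lipschitz constant $2\sqrt{8}\ln m$.

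There is essentially no obstacle here: the only conceptual content is the subadditivity of the Lipschitz constant under differences, which is immediate from the triangle inequality, and the only hypothesis carried over is $m\geqslant 3$, needed so that the bound $4(1+(\ln m)^2)\leqslant 8(\ln m)^2$ from the previous lemma remains valid. Consequently this estimate is the one that will feed into L\'evy's lemma with $k = nm - 1$ and $\eta = 2\sqrt{8}\ln m$, producing the concentration inequality of Theorem~\ref{th:con-ent}; I would flag only that the constant $2\sqrt{8}$ is almost certainly not tight (the diagonal and full entropies are far from independent), but tightness is irrelevant for the exponential typicality statement we are after.
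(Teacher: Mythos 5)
Your proof is correct and coincides with the paper's own argument: both write $G$ as the difference of the diagonal-entropy and entropy functions, apply the triangle inequality, and add the two Lipschitz constants $\sqrt{8}\ln m$ from the preceding lemmas. No gaps to report.
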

\begin{proof}
Take $\sigma_A = \mathrm{Tr}_B\left( \proj{\phi_{AB}} \right)$.
\begin{align*}
 \abs{G\Pa{\ket{\psi_{AB}}}- G\Pa{\ket{\phi_{AB}}}}&:= \abs{S\Pa{\rho_A^{(\mathrm{d})}} - S\Pa{\sigma_A^{(\mathrm{d})}} -\Br{S\Pa{\rho_{A}} - S\Pa{\sigma_{A}}}}\nonumber\\
 & \leqslant \abs{S\Pa{\rho_A^{(\mathrm{d})}} - S\Pa{\sigma_A^{(\mathrm{d})}}}+\abs{S\Pa{\rho_{A}} - S\Pa{\sigma_{A}}}\nonumber\\
 & \leqslant \sqrt{8}\ln m\norm{\ket{\psi_{AB}} - \ket{\phi_{AB}}}_2+\sqrt{8}\ln m\norm{\ket{\psi_{AB}} - \ket{\phi_{AB}}}_2\nonumber\\
 & \leqslant 2\sqrt{8}\ln m\norm{\ket{\psi_{AB}} - \ket{\phi_{AB}}}_2.
\end{align*}
Thus, $G$ is a Lipschitz continuous function with the Lipschitz constant $2\sqrt{8}\ln m$.
\end{proof}
Now applying L\'evy's lemma, Eq. (\ref{Levy-lemma}), to the function
$G\Pa{\ket{\psi_{AB}}}= \sC_r(\rho_A)$, we have
\begin{align}\label{Eq:conc}
\mathrm{Pr} &\Set{\abs{\sC_r(\rho_A)
-\overline{\sC_r}(n-m+1,1)}> \epsilon} \leqslant 2\exp\Pa{-\frac{m n
\epsilon^2}{144 \pi^3 \ln 2 (\ln m)^2}},
\end{align}
for all $\epsilon > 0$. This completes the proof of Theorem \ref{th:con-ent} of the main text.

\end{document}